\newcommand{\sys}{\mbox{\textsc{Shara}}\xspace}
\newcommand{\cc}[1]{\mbox{\smaller[0.5]\texttt{#1}}}
\def\Snospace~{\S{}}
\newif\ifdraft\drafttrue
\newif\ifnotes\notestrue
\newcolumntype{R}[1]{>{\raggedleft\let\newline\\\arraybackslash\hspace{0pt}}p{#1}}
\newcommand{\squishlist}{
\begin{itemize}[noitemsep,nolistsep]
  \setlength{\itemsep}{-0pt}
}
\newcommand{\squishend}{
  \end{itemize}
}
\newcommand{\assign}{\mathbin{:=}}
\newcommand{\bigland}{\bigwedge}
\newcommand{\biglor}{\bigvee}
\newcommand{\entails}{\models}
\newcommand{\false}{\mathsf{False}}
\newcommand{\intersection}{\cap}
\newcommand{\none}{\mathsf{None}}
\newcommand{\union}{\cup}
\newtheorem{defn}{\bf{Definition}}
\newtheorem{ex}{\bf{Example}}
\newtheorem{lem}{\bf{Lemma}}
\newtheorem{thm}{\bf{Theorem}}
\newcommand{\bodyof}[1]{\mathsf{Body}(#1)}
\newcommand{\consof}[1]{\mathsf{Constraint}(#1)}
\newcommand{\collapse}[2]{\textsc{Collapse}(#1, #2)}
\newcommand{\copyrel}{\textsc{CopyRel}}
\newcommand{\corr}{\textsc{Corr}}
\newcommand{\ctrof}[1]{\mathsf{Ctr}(#1)}
\newcommand{\depsof}[1]{\mathsf{Deps}(#1)}
\newcommand{\tdepsof}[1]{\mathsf{TrDeps}(#1)}
\newcommand{\duality}{\textsc{Duality}\xspace}
\newcommand{\eldarica}{\textsc{Eldarica}\xspace}
\newcommand{\expand}{\textsc{Expand}}
\newcommand{\expandaux}{\textsc{ExpAux}\xspace}
\newcommand{\expandsto}{\preceq}
\newcommand{\headof}[1]{\mathsf{Head}(#1)}
\newcommand{\impact}{\textsc{Impact}\xspace}
\newcommand{\mcchc}{\cc{S}_{\cc{DA}}}
\newcommand{\nosoln}{\mathsf{None}}
\newcommand{\postctr}{\textsc{Post}}
\newcommand{\prectr}{\textsc{Pre}}
\newcommand{\predof}[1]{\mathsf{Pred}( #1 )}
\newcommand{\seahorn}{\textsc{SeaHorn}\xspace}
\newcommand{\shara}{\textsc{Shara}}
\newcommand{\sharingclause}{\textsc{SharedRel}\xspace}
\newcommand{\solvecdd}{\textsc{SolveCdd}}
\newcommand{\topSort}{\textsc{TopologicalSort}}
\newcommand{\solveitp}{\textsc{Itp}\xspace}
\newcommand{\tformulas}[1]{\mathsf{Forms}( #1 )}
\newcommand{\vc}[1]{\textsc{Cex}(#1)}
\newcommand{\vinta}{\textsc{Vinta}\xspace}
\newcommand{\vocab}{\mathsf{Vocab}}
\newcommand{\whale}{\textsc{Whale}\xspace}
\newcommand{\zthree}{\textsc{Z3}\xspace}
\begin{document}

\title{
  Solving Constrained Horn Clauses Using \\
  Dependence-Disjoint Expansions}

\author{Qi Zhou, David Heath, and William Harris}

\newcommand{\titlerunning}{%
  Solving Constrained Horn Clauses Using
  Dependence-Disjoint Expansions
}

\newcommand{\authorrunning}{%
  Qi Zhou, David Heath, and William Harris
}

\date{}
\maketitle

\begin{abstract}
  Recursion-free Constrained Horn Clauses (CHCs) are logic-programming
  problems that can model safety properties of programs with bounded
  iteration and recursion.
  In addition, many CHC solvers reduce recursive systems to a series
  of recursion-free CHC systems that can each be solved efficiently.

  In this paper, we define a novel class of recursion-free systems,
  named \emph{Clause-Dependence Disjoint} (CDD), that
  generalizes classes defined in previous work.
  The advantage of this class is that many CDD systems are smaller
  than systems which express the same constraints but are part of a
  different class.
  This advantage in size allows CDD systems to be solved more
  efficiently than their counterparts in other classes.
  We implemented a CHC solver named
  \sys.  \sys~solves arbitrary CHC systems by reducing the input to a
  series of CDD systems.
  Our evaluation indicates that \sys~outperforms state-of-the-art
  implementations in many practical cases.
\end{abstract}

\section{Introduction}
\label{sec:intro}
Many critical problems in program verification can be reduced to
solving systems of Constrained Horn Clauses (CHCs), a class of
logic-programming
problems~\cite{bjorner13,flanagan03,rummer13a,rummer13b}.
A CHC is a logical implication with the following form:
$$
  R_1(\vec{v_1}) \leftarrow R_2(\vec{v_2}) \land R_3(\vec{v_3}) \land
  ... \land \varphi(\vec{v_0}, \vec{v_1}, \vec{v_2}, \vec{v_3},...)
$$
Here, the left side of the implication, called the head, contains an
uninterpreted relational predicate applied to a vector of variables.
The right side has any number of such predicates conjoined together
with a \emph{constraint} ($\varphi$). The constraint is a logical formula in a background
theory and may use variables named by the predicates.
A CHC system is a set of CHCs.
The goal of the CHC solving problem is to find suitable interpretations
for each predicate such that each CHC is logically
consistent in isolation.

In this work we focus on the subclass of CHC systems which are known
as \emph{recursion-free}. In a recursion-free CHC system, no
derivation of a predicate will invoke that predicate.
Less formally, a recursion-free CHC system is one where following
implication arrows through the system will never reach the same clause
twice.
Recursion-free CHC systems are an important subclass for two reasons.
First, recursion-free systems can be used to model safety properties
for hierarchical programs~\cite{lal-qadeer15,lal-qadeer-lahiri12}
(programs with only bounded iteration and recursion).
Second and most importantly, a well-known approach for solving a
general CHC system reduces the input problem to solving a sequence
of recursion-free systems.
Such approaches attempt to synthesize a solution for the   original
system from the solutions of recursion-free systems~\cite{bjorner13}.
The performance of such solvers relies
heavily on the performance of solving recursion-free CHC systems.
%

Typically, even recursion-free CHC systems are not solved directly.
Instead, they are reduced to a more specific subclass of
recursion-free CHC system.
These classes include those of
\emph{body-disjoint} (or \emph{derivation tree})
systems~\cite{heizmann10,bjorner13,mcmillan14,rummer13a,rummer13b} and
of \emph{linear} systems~\cite{albarghouthi12a}.
We will discuss these classes in \autoref{sec:overview} and
\autoref{sec:related-work}.
Such classes can be solved by issuing
\emph{interpolation queries} to find suitable definitions for the
uninterpreted predicates.
%

In general, solving a recursion-free CHC system for
propositional logic and the theory of linear integer arithmetic is
co-NEXPTIME-complete~\cite{rummer13b}.
In contrast, solving a linear system or body-disjoint system with the
same logic and theories is in co-NP~\cite{rummer13b}.
We refer to such classes that are solvable in co-NP time as
\emph{directly solvable}.
Because solving an arbitrary recursion-free system is harder than
solving a directly solvable system, solvers which reduce to directly
solvable systems are highly reliant on the size of the reductions.

The first contribution of this paper is the introduction of a novel
class of directly solvable systems that we refer to as
\emph{Clause-Dependence Disjoint} (CDD).
The formal definition of CDD is given at ~\autoref{defn:cdds}.
CDD is a strict superset of the union of previously introduced classes
of directly solvable systems.
The key characteristic of this class is that when an arbitrary
recursion-free system is reduced to a CDD system and to a system from
a different directly solvable class, the CDD system is frequently the
smaller of the two.
Therefore, solving recursion-free systems by reducing them to CDD form
is often less computationally expensive than reducing them to a
system in a different class.

The second contribution of this paper is a solver for CHC systems,
named \sys.
Given a recursion-free system $S$, \sys reduces the problem of solving
$S$ to solving a CDD system $S'$.
In the worst case, it is possible that the size of $S'$ may be
exponential in the size of $S$.
However, empirically we have found that the size of $S'$ is usually
close enough to the size of $S$ that \sys frequently outperforms
\duality, one of the best known CHC solvers.
The procedure implemented in \sys is a generalization of existing
techniques that synthesize compact verification conditions for
hierarchical programs~\cite{flanagan01,lal-qadeer15}.
Given a general (possibly recursive) CHC system, \sys solves a
sequence of recursion-free systems.
Each subsystem is a bounded unwinding of the original system. \sys
attempts to combine the solutions of these recursion-free systems to
synthesize a solution to the original problem, as has been proposed
in previous work~\cite{rummer13b}.

We implemented \sys within the \duality CHC solver~\cite{bjorner13},
which is implemented within the \zthree automatic theorem
prover~\cite{moura08}.
We evaluated the effectiveness of \sys on standard benchmarks drawn
from SVCOMP15~\cite{svcomp15}.
The results indicate that \sys outperforms modern solvers many cases.
Futhermore, the results indicate that combining the strengths of \sys
with that of other existing approaches (as discussed in
\autoref{sec:evaluation}) is a promising direction for the future of
CHC solving.

The rest of this paper is organized as follows.
\autoref{sec:overview} illustrates the operation of \sys on a
recursion-free CHC system.
\autoref{sec:background} reviews technical work on which \sys is
based.
\autoref{sec:approach} describes \sys in technical detail.
\autoref{sec:evaluation} gives the results of our empirical evaluation
of \sys.
\autoref{sec:related-work} compares \sys to related work.

\section{Overview}
\label{sec:overview}

In \autoref{sec:running-ex}, we describe a recursion-free CHC system,
$\mcchc$ (\autoref{fig:chc}), that models the safety of the program
\cc{dblAbs} (\autoref{fig:multicall-code}).
In \autoref{sec:solve-ex}, we show that $\mcchc$ is a CDD system and
how \sys can solve it by encoding it into binary interpolants.
In \autoref{sec:not-in}, we illustrate that $\mcchc$ is not in
directly solvable classes introduced in previous work.
%

\subsection{Verifying \cc{dblAbs}: an example hierarchical program}
\label{sec:running-ex}

\begin{figure}[t]
  \centering
  \begin{floatrow}[2]
    \ffigbox[.3\textwidth] {%
      \caption{\cc{dblAbs}: an example hierarchical program.} %
      \label{fig:multicall-code} }
    {
      \lstinputlisting[
        basicstyle=\small,
        xleftmargin=2em,framexleftmargin=1.5em,
        numbers=left,
        escapeinside={(*@}{@*)},
        morekeywords={def, return, if, else, assert}]
        {code/dblAbs.java}
    }
    \ffigbox[.66\textwidth]{%
      \caption{A CHC system that models the safety condition of $\cc{dblAbs}$,
    named $\mcchc$.}}{%
      \label{fig:chc}
      \begin{align}
        \cc{dbl}(\cc{x}, \cc{d}) &\gets \cc{d} = 2 * \cc{x} \\
        \cc{L}_4(\cc{n}, \cc{abs}) &\gets  \cc{abs} = 0 \\
        \cc{L}_6(\cc{n}, \cc{abs}) &\gets \cc{L}_4(\cc{n}, \cc{abs}) \land \cc{n} \ge 0 \\
        \cc{L}_8(\cc{n}, \cc{abs}) &\gets \cc{L}_4(\cc{n}, \cc{abs}) \land \cc{n} < 0 \\
        \cc{L}_9(\cc{n}, \cc{abs'}) &\gets \cc{L}_6(\cc{n},\cc{abs}) \land \cc{abs'} = \cc{n} \\
        \cc{L}_9(\cc{n}, \cc{abs'}) &\gets \cc{L}_8(\cc{n},\cc{abs}) \land \cc{abs'} = -\cc{n} \\
        \cc{main}(\cc{n},\cc{res}) &\gets \cc{L}_9(\cc{n}, \cc{abs'}) \land
        \cc{dbl}(\cc{x}, \cc{d})
        \land \cc{abs'} = \cc{x} \land \cc{res} = \cc{d} \\
        False &\gets \cc{main}(\cc{n}, \cc{res}) \land \cc{res} < 0 
      \end{align}
    }
  \end{floatrow}
\end{figure}
%
\cc{dblAbs} is a procedure that doubles the absolute value of
its input and stores the result in \cc{res}.
The program also asserts that \cc{res} is greater than or equal to $0$
before exiting.
Verifying this assertion reduces to solving a recursion-free
CHC system over a set of uninterpreted predicates that represent
the control locations in \cc{dblAbs}.
In particular, one such system, $\mcchc$, is shown in
\autoref{fig:chc}.
While $\mcchc$ has been presented as the result
of a translation from \cc{dblAbs}, \sys is
purely a solver for CHC systems: it does not require access to the
concrete representation of a program, or for a given CHC system to be
the result of translation from a program at all.

\subsection{$\mcchc$ as a Clause-Dependence Disjoint System}
\label{sec:solve-ex}
The recursion-free CHC system $\mcchc$ is a
\emph{Clause-Dependence Disjoint} (CDD) system.
A CHC system can be classified as CDD when each clause satisfies the
following rules:
\textbf{(1)} no two predicates in the body of the same clause share
any transitive dependencies on other predicates and
\textbf{(2)} no clause has more than one occurrence of a given
predicate in the body.
As an example, clause $(7)$ is dependence disjoint. 
Two predicates $\cc{L}_9$ and $\cc{dbl}$ are in its body.
The transitive dependency of $\cc{L}_9$ is the set
$\{\cc{L}_4,\cc{L}_6,\cc{L}_8\}$ while the transitive dependency of
\cc{dbl} is the empty set.
Therefore, their transitive dependencies are disjoint:
$\{\cc{L}_4,\cc{L}_6,\cc{L}_8\} \cap \varnothing = \varnothing$.
All other clauses in $\mcchc$ have at most one uninterpreted predicate
in the body, so they are trivially disjoint dependent.
Therefore $\mcchc$ is a CDD system.
The formal definition of CDD and its key properties are given in
\autoref{sec:CDD-defn}.
The formal definition of transitive dependency is given in
\autoref{sec:chcs}.
%

\sys solves CDD systems directly by issuing a binary interpolation
query for each uninterpreted predicate in topological order.
Each interpretation of a predicate $P$ can be computed by
interpolating
\textbf{(1)} the \emph{pre}-formula, constructed from clauses where
$P$ is the head, and
\textbf{(2)} the \emph{post}-formula, constructed from
all clauses where the head transitively depends on $P$.

For example, consider $\cc{L}_9$. By the time \sys attempts to
synthesize an interpretation for $\cc{L}_9$ it will have solutions for
$\cc{L}_4$, $\cc{L}_6$, $\cc{L}_8$.
Possible interpretations of these predicates are shown in
\autoref{fig:ex-graph}.
The pre-formula is constructed from the bodies of clauses where
$\cc{L}_9$ is the head.
Each relational predicate, $P$,
is replaced by a corresponding boolean indicator variable, $\cc{b}_P$.
Each boolean indicator variable implies the solution for its
predicate,
encoded as the disjunction of the negation of the boolean
indicator variable and the solution.
In particular, the pre-formula for $\cc{L}_9$ is constructed from
clauses (5) and (6):
\begin{gather}
  ((\cc{b}_{\cc{L}_6} \land \cc{abs'} = \cc{n})
  \lor
  (\cc{b}_{\cc{L}_8} \land \cc{abs'} = -\cc{n}))
  \land
  (\neg \cc{b}_{\cc{L}_6} \lor \cc{n} \ge 0)
  \land
  (\neg \cc{b}_{\cc{L}_8} \lor \cc{n} < 0)
\end{gather}
The post-formula is constructed from clauses that transitively depend
on $\cc{L}_9$. Again, we replace relational predicates by
corresponding boolean indicators. However, we omit the boolean indicator for
$\cc{L}_9$. The post-formula is composed from clauses (1), (7), and
(8):
\begin{gather}
  (\neg \cc{b}_{\cc{dbl}} \lor \cc{d} = 2*\cc{x})
  \land
  (\neg \cc{b}_{\cc{main}} \lor
    (\cc{b}_{\cc{dbl}}
    \land \cc{abs'}= \cc{x}
    \land \cc{res}=\cc{d} ))
  \land
  (\cc{b}_{\cc{main}} \land \cc{res} < 0)
\end{gather}
Interpolating the pre and post formulas yields an interpretation
of $\cc{L}_9$: $\cc{abs'} \geq 0$.
The procedure for solving a CDD system is described in formal detail
in \autoref{sec:solve-cdd}.

\subsection{$\mcchc$ is not in other recursion-free classes}
\label{sec:not-in}
In this section, we show that $\mcchc$ is not in other
known classes of recursion-free CHC systems. 
Specifically, we will
discuss body-disjoint systems and linear systems.
%

Body-disjoint (or derivation tree)
systems~\cite{mcmillan14,bjorner13,heizmann10,rummer13a,rummer13b} are
a class of recursion-free CHC system where each uninterpreted
predicate appears in the body of at most one clause and appears in
such a clause exactly once.
Such systems cannot model a program with multiple control paths that
share a common prefix, typically modeled as a CHC system with an
uninterpreted predicate that occurs in the body of multiple clauses.
$\mcchc$ is not a body-disjoint system because $\cc{L}_4$ appears in the
body of both clause (3) and clause (4).
In order to handle $\mcchc$, a solver that uses
body-disjoint systems would have to duplicate
$\cc{L}_4$. 
Worse, if $\cc{L}_4$ had dependencies, then each dependency would also
need to be duplicated.

Previous work has also introduced the class of linear
systems~\cite{albarghouthi12a}, where the body of each clause has at
most one uninterpreted predicate.
However, such systems cannot directly model the control flow of a
program that contains procedure calls.
$\mcchc$ is not a linear system because the body of clause (7) has two
predicates, \cc{L9} and \cc{dbl}.
CHC solvers that use linear systems effectively inline the
constraints for relational predicates that occur in non-linear
clauses~\cite{albarghouthi12b}.
In the case of $\mcchc$, inlining the constraints of $\cc{dbl}$ is
efficient, but in general such approaches can generate systems that
are exponentially larger than the input. For example, if the procedure
$\cc{dbl}$ were called more than once in $\cc{dblAbs}$ then multiple
copies of the body of the procedure would be inlined. And if the body
of this procedure were large, the inlining could become
prohibitively expensive.

\begin{figure}[t]
  \centering
  \begin{floatrow}[2]
    \ffigbox[.45\textwidth] {%
      {\caption{$\mcchc$ as a directed hypergraph. Each relational
          predicate is depicted as a graph node while each clause is
          represented by a hyperedge. Each hyperedge is labeled by the
          constraint in the corresponding CHC\@. Each node has a valid
          corresponding interpretation, written in braces.
      }\label{fig:ex-graph}}}
      {\scalebox{0.75}{%
\begin{tikzpicture}
\begin{scope}[every node/.style={circle,thick,draw}]
  \node[label={0:$\{true\}$}] (L4) at (0,0) {$\cc{L}_4$};
  \coordinate[above = of L4] (L4');
  \node[label={180:$\{\cc{n} \geq 0\}$}, below left = of L4] (L6) {$\cc{L}_6$};
  \node[label={0:$\{\cc{n} < 0\}$}, below right = of L4] (L8) {$\cc{L}_8$};
  \node[label={180:$\{\cc{abs'} \geq 0\}$}, below right = of L6] (L9) {$\cc{L}_9$};
  \node[label={0:$\{\cc{res} \geq 0\}$}, below right = 1.6 and 1.2 of L9] (main) {$\cc{main}$};
  \node[label={0:$\{\cc{d} = 2*\cc{x}\}$}, above right = 1.6 and 1.2 of main] (dbl) {$\cc{dbl}$};
  \coordinate[above = of dbl] (dbl');
  \node[label={0:$\{false\}$}, below = of main, style = ultra thick] (bot) {$\bot$};
  \coordinate[above = 0.5 of main] (main');
\end{scope}

\begin{scope}[every edge/.style={draw=black, very thick}]
  \path[->, style=right] (L4') edge node {$\cc{abs} = 0$ \textbf{(2)}} (L4);
  \path[->, style=above left] (L4) edge node {$\cc{n} \geq 0$ \textbf{(3)}} (L6);
  \path[->, style=above right] (L4) edge node {$\cc{n} < 0$ \textbf{(4)}} (L8);
  \path[->, style=left] (L6) edge node {$\cc{abs'} = \cc{n}$ \textbf{(5)}} (L9);
  \path[->, style=right] (L8) edge node {$\cc{abs'} = -\cc{n}$ \textbf{(6)}} (L9);
  \path[-] (L9) edge[out=330, in=90] (main');
  \path[->, style=right] (dbl') edge node {$\cc{d} = 2*\cc{x}$ \textbf{(1)}} (dbl);
  \path[-] (dbl) edge[out=210, in=90] (main');
  \path[->, style=above right] (main') edge node {$\cc{abs'} = \cc{x} \land \cc{res} = \cc{d}$ \textbf{(7)}} (main);
  \path[->, style=right] (main) edge node {$\cc{res} < 0$ \textbf{(8)}} (bot);
\end{scope}
\end{tikzpicture}
}}
    \ffigbox[.45\textwidth]
      {\caption{The hierarchy of classes of recursion-free CHC
          systems. Body Disjoint and Linear systems are subsumed by
          CDD systems. Solving Directly Solvable CHC systems is in
          co-NP while solving general, recursion-free systems is
          co-NEXPTIME Complete.
      }\label{fig:hierarchy}}
      {\begin{tikzpicture}
\begin{scope}
  \node at (0, 0) (rec) {%
    \begin{tabular}{c}
      Recursion Free\\\textsc{co-Nexptime Complete}
    \end{tabular}
  };

  \node[below = 0.5 of rec] (dir) {%
    \begin{tabular}{c}
      Directly Solvable\\\textsc{co-NP}
    \end{tabular}
  };

  \node[draw, rounded corners=0.5cm, below = 0.5 of dir] (cdd) {%
    \begin{tabular}{c}
      Clause-Dependence Disjoint\\\textsc{co-NP}
    \end{tabular}
  };

  \node[below left = 0.5 and -1.1 of cdd] (bd) {%
    \begin{tabular}{c}
      Body-Disjoint\\\textsc{co-NP}
    \end{tabular}
  };

  \node[below right = 0.5 and -1.1 of cdd] (lin) {%
    \begin{tabular}{c}
      Linear\\\textsc{co-NP}
    \end{tabular}
  };

\end{scope}
\begin{scope}[every edge/.style={draw=black, very thick}]
  \path[-] (bd) edge (cdd);
  \path[-] (lin) edge (cdd);
  \path[-] (cdd) edge (dir);
  \path[-] (dir) edge (rec);
\end{scope}
\end{tikzpicture}}
  \end{floatrow}
\end{figure}

The hierarchy of discussed classes of recursion-free systems is
depicted in \autoref{fig:hierarchy}.
As shown, the class of CDD systems is a superset of both the class
of body-disjoint systems and the class of linear systems. So any
recursion-free system that is efficiently expressible in body-disjoint
or linear form is also efficiently expressible in CDD form.
In addition, some systems which are expensive to express in
body-disjoint or linear form are efficiently expressible in CDD form.
\sys~takes advantage of this fact when solving input systems.
Given an arbitrary recursion-free CHC system $S$, \sys~reduces $S$ to
a CDD system $S'$ and solves $S'$ directly.
In general, $S'$ may have size exponential in the size of
$S$.
However, \sys~generates CDD systems via heuristics analogous to
those used to generate compact verification conditions of hierarchical
programs~\cite{flanagan01,lal-qadeer15}.
In practice these heuristics often yield CDD systems which are
small with respect to the input system.
A general procedure for constructing a CDD expansion of a given CHC
system is given in \autoref{app:cons-cdd}.

\section{Background}
\label{sec:background}

\subsection{Constrained Horn Clauses}
\label{sec:chcs}

\subsubsection{Structure}
A Constrained Horn Clause is a logical implication where the
antecedent is called the body and the consequent is called the head.
The body is a conjunction of a logical formula, called the constraint,
and a vector of uninterpreted predicates. The constraint is an arbitrary
formula in some background logic, such as linear integer arithmetic.
The uninterpreted predicates are applied to variables which may or may
not appear in the constraint.
A head can be either an uninterpreted predicate applied to variables or $False$.
A clause where the head is $False$ is called a query. A CHC can be
defined structurally:
\begin{align*}
\cc{chc} \Coloneqq&~\cc{head} \gets \cc{body} \\
\cc{head} \Coloneqq&~False \\
  \mid&~\cc{pred} \\
\cc{body} \Coloneqq&~\varphi \wedge \cc{preds} \\
\cc{preds} \Coloneqq&~True \\
  \mid&~\cc{pred} \wedge \cc{preds} \\
\cc{pred} \Coloneqq&~\textsf{\emph{an uninterpreted predicate applied to variables}} \\
\varphi \Coloneqq&~\textsf{\emph{a formula}} \\
\end{align*}
%
For a given CHC $C$, $\bodyof{C}$ denotes the vector of uninterpreted
predicates in the body and $\consof{C}$ denotes the constraint in the
body.
If $C$ is not a query, then $\headof{C}$ denotes the uninterpreted
predicate in the head.
%
A CHC system is a set of CHCs where exactly one clause is a query.
For a given CHC system $S$, $\predof{S}$ denotes the set of all
uninterpreted predicates and $\cc{query}$ denotes the body of the
query clause.

To explain the structure of a CDD system, we need terminology that relates
predicates in a CHC system including the terms \emph{predicate dependency},
\emph{transitive predicate dependency}, and \emph{sibling}.
\begin{defn}
  Given a CHC system $\cc{S}$ and two uninterpreted predicates
  $\cc{P}$ and $\cc{Q} \in \predof{S}$, if $\exists \cc{C} \in \cc{S}$
  such that $\cc{P} = \headof{C}$ and $\cc{Q} \in \bodyof{C}$, then
  $\cc{Q}$ is a \emph{predicate dependency} of $\cc{P}$.
\end{defn}
\begin{ex}
  In $\mcchc$, because $\cc{L}_4$ is in the body of clause (4) and
  $\cc{L}_8$ is the head of clause (4), $\cc{L}_4$ is a predicate
  dependency of $\cc{L}_8$.
\end{ex}
Given a CHC system $\cc{S}$ and an uninterpreted predicate $\cc{P}$,
$\depsof{P}$ denotes the set of all predicate dependencies of $\cc{P}$
in $\cc{S}$.
%
\begin{defn}
  Given a CHC system $\cc{S}$ and three uninterpreted predicates
  $\cc{P}, \cc{Q}$, and $\cc{R} \in \predof{S}$, if $\cc{Q} \in
  \depsof{P}$ then \cc{Q} is a \emph{transitive predicate dependency} of
  \cc{P}.
  If \cc{Q} is a transitive predicate dependency of \cc{P} and \cc{R}
  is a transitive predicate dependency of \cc{Q}, then $\cc{R}$ is a
  \emph{transitive predicate dependency} of $\cc{P}$.
\end{defn}
\begin{ex}
  In $\mcchc$, because $\cc{L}_4$ is a predicate dependency of
  $\cc{L}_8$, $\cc{L}_4$ is a transitive predicate dependency of
  $\cc{L}_8$.
  And because $\cc{L}_8$ is an transitive predicate dependency of
  $\cc{L}_9$, $\cc{L}_4$ is a transitive predicate dependency of
  $\cc{L}_9$.
\end{ex}
Given a CHC system $\cc{S}$ and an uninterpreted predicate $\cc{P}$,
$\tdepsof{P}$ denotes the set of all transitive predicate dependencies
of $\cc{P}$ in $\cc{S}$.
%
\begin{defn}
  Given a CHC system $\cc{S}$ and two uninterpreted predicates
  $\cc{P}$ and $\cc{Q} \in \predof{S}$, if $\exists \cc{C} \in \cc{S}$
  such that $\cc{P} \in \bodyof{C}$ and $\cc{Q} \in \bodyof{C}$, then
  $\cc{Q}$ and $\cc{P}$ are siblings.
\end{defn}
\begin{ex}
  Because uninterpreted predicates $\cc{L}_9$ and $\cc{dbl}$ both
  appear in the body of clause (7), $\cc{L}_9$ and $\cc{dbl}$ are
  siblings.
\end{ex}
For a given CHC system $\cc{S}$, if there is no uninterpreted
predicate $\cc{P} \in \predof{S}$ such that $\cc{P} \in \tdepsof{P}$,
then $\cc{S}$ is a \emph{recursion-free} CHCs system.

A solution to a CHC system $S$ is a map from each predicate $\cc{P}
\in \predof{S}$ to its corresponding interpretation which is a
formula.
For a solution to be valid, each clause in $S$ must be valid after
substituting each predicate by its interpretation.

\subsection{Logical interpolation}
\label{sec:itps}
All logical objects in this paper are defined over a fixed space of first-order
variables, $X$.
For a theory $T$, the space of $T$
formulas over $X$ is denoted $\tformulas{T}$.
For each formula $\varphi \in \tformulas{T}$, the set of free
variables that occur in $\varphi$ (i.e., the \emph{vocabulary} of
$\varphi$) is denoted $\vocab(\varphi)$.
%
For formulas $\varphi_0, \ldots, \varphi_n, \varphi \in
\tformulas{T}$, the fact that $\varphi_0, \ldots, \varphi_n$
\emph{entail} $\varphi$ is denoted $\varphi_0, \ldots, \varphi_n
\entails \varphi$.

An interpolant of a pair of mutually inconsistent formulas
$\varphi_0$ and $\varphi_1$ in $\tformulas{T}$ is a formula $I$ in $\tformulas{T}$ over 
their common vocabulary that explains their inconsistency.
\begin{defn}
  \label{defn:itps}
  For $\varphi_0, \varphi_1, I \in \tformulas{T}$, if
  \textbf{(1)} $\varphi_0 \entails I$, %
  \textbf{(2)} $I \land \varphi_1 \entails \false$, and %
  \textbf{(3)} $\vocab(I) \subseteq \vocab(\varphi_0) \intersection
  \vocab(\varphi_1)$,
  then $I$ is an \emph{interpolant} of $\varphi_0$ and $\varphi_1$.
\end{defn}
For the remainder of this paper, all spaces of formulas will be
defined for a fixed, arbitrary theory $T$ that supports
interpolation, such as the theory of linear
arithmetic.
Although determining the satisfiability of formulas in such theories
is NP-complete in general, decision procedures~\cite{moura08} and
interpolating theorem provers~\cite{mcmillan04} for such theories have
been proposed that operate on such formulas efficiently.
We define \sys in terms of an abstract interpolating theorem
prover for $T$ named $\solveitp$.
Given two formulas $\varphi_0$ and $\varphi_1$, if $\varphi_0$ and $\varphi_1$ are mutually 
inconsistent, $\solveitp$ returns the interpolant of $\varphi_0$ and $\varphi_1$.
Otherwise, $\solveitp$ returns $\none$.

\section{Technical Approach}
\label{sec:approach}
This section presents the technical details of our approach.
\autoref{sec:CDD-defn} presents the class of Clause-Dependence Disjoint
systems and its key properties.
\autoref{sec:solve-cdd} describes how \sys solves CDD systems
directly.
\autoref{sec:core-solver} describes how \sys solves a given
recursion-free system by solving an CDD system.
Proofs of all theorems stated in this section are in the appendix.

\subsection{Clause-Dependence Disjoint Systems}
\label{sec:CDD-defn}
The key contribution of our work is the introduction of the class of
Clause-Dependence Disjoint (CDD) CHC systems:
\begin{defn}
  \label{defn:cdds}
  For a given recursion-free CHC system $\cc{S}$,
  if for all sibling pairs, $P, Q \in \predof{S}$,
  the transitive dependencies of $P$ and $Q$ are disjoint ($\tdepsof{P} \cap
  \tdepsof{Q} = \emptyset$)
  and no predicate shows more than once in the body of a single clause,
  then $\cc{S}$ is \emph{Clause-Dependence Disjoint (CDD)}.
\end{defn}
CDD systems model hierarchical programs with branches and procedure calls such
that each execution path invokes each statement at most once.
\begin{ex}
  The CHC system $\mcchc$ is a CDD system. An argument is given in
  \autoref{sec:solve-ex}.
\end{ex}

As discussed in \autoref{sec:overview}, CDD is a superset of the union of
the class of body-disjoint systems and the class of linear systems.
For a given recursion-free system $S$, if each uninterpreted predicate $Q \in
\predof{S}$ appears in the body of at most one clause and no
predicate appears more than once in the body of a single clause,
then $S$ is \emph{body-disjoint}~\cite{rummer13a,rummer13b}.
If the body of each clause in $\cc{S}$ contains at most one relational
predicate, then $\cc{S}$ is \emph{linear}~\cite{albarghouthi12a}.

\begin{thm}
\label{thm:cdd-contains}
  The class of CDD systems is a strict superset of the union of the
  class of body-disjoint systems and the class of linear systems.
\end{thm}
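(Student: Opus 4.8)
The plan is to prove the statement in two halves: that the union of the body-disjoint and linear classes is contained in CDD, and that this containment is strict. The containment half splits into two inclusions. The inclusion of linear systems is the easier one: if $\cc{S}$ is linear then every clause body holds at most one uninterpreted predicate, so no two distinct predicates can co-occur in a body. Hence $\cc{S}$ has no sibling pairs and the disjointness requirement of \autoref{defn:cdds} holds vacuously, while the no-repeated-predicate requirement is immediate because each body already has at most one predicate. Thus every linear system is CDD.

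For the inclusion of body-disjoint systems, the no-repeated-predicate requirement of CDD is part of the definition of body-disjoint, so only the disjointness requirement needs work. The key structural observation I would record is that in a body-disjoint system each predicate occurs in the body of at most one clause, hence is a predicate dependency of at most one predicate. Phrased in terms of the dependency graph whose edges run from $\headof{\cc{C}}$ to each element of $\bodyof{\cc{C}}$, every node has in-degree at most one; combined with recursion-freedom (acyclicity), this makes the graph a forest in which $\cc{R} \in \tdepsof{\cc{P}}$ holds exactly when $\cc{R}$ is a proper descendant of $\cc{P}$.

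The crux, and the step I expect to be the main obstacle, is turning this into the disjointness of $\tdepsof{\cc{P}}$ and $\tdepsof{\cc{Q}}$ for siblings $\cc{P},\cc{Q}$. I would argue by contradiction: suppose some $\cc{R} \in \tdepsof{\cc{P}} \cap \tdepsof{\cc{Q}}$. Because every node has a unique incoming edge, the ancestors of $\cc{R}$ form a single chain and are therefore totally ordered by the dependency relation; in particular $\cc{P}$ and $\cc{Q}$, both being ancestors of $\cc{R}$, would be comparable, so one of them would be a transitive dependency of the other. I would then rule this out using the sibling structure: if $\cc{C}$ is the clause with $\cc{P},\cc{Q} \in \bodyof{\cc{C}}$ and head $\cc{H}$, then the unique edge into $\cc{P}$ is $\cc{H}\to\cc{P}$, so $\cc{P} \in \tdepsof{\cc{Q}}$ would force $\cc{H} \in \tdepsof{\cc{Q}}$, while $\cc{H}\to\cc{Q}$ gives $\cc{Q} \in \tdepsof{\cc{H}}$, a cycle contradicting recursion-freedom. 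The symmetric case is identical, so no such $\cc{R}$ exists and $\cc{S}$ is CDD.

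Finally, for strictness it suffices to exhibit a single CDD system outside the union, and the running example serves directly: $\mcchc$ is CDD (as argued in \autoref{sec:solve-ex}), yet it is not body-disjoint because $\cc{L}_4$ appears in the bodies of both clause (3) and clause (4), and it is not linear because the body of clause (7) contains the two predicates $\cc{L}_9$ and $\cc{dbl}$. This witnesses that CDD strictly contains the union, completing the argument.
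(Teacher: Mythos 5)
Your proof is correct and follows essentially the same route as the paper's: linear systems satisfy the CDD conditions vacuously, body-disjoint systems satisfy them because their dependency relation forms a forest, and $\mcchc$ witnesses strictness. The only difference is that where the paper simply asserts that the subtrees below two body predicates are disjoint, you spell out why (in-degree at most one forces any shared transitive dependency to make the siblings $P$ and $Q$ comparable, which would create a cycle through the clause head) --- a worthwhile bit of extra rigor at the one step the paper glosses over.
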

Proof is given in \autoref{app:char}.

\subsection{Solving a CDD system}
\label{sec:solve-cdd}

\begin{algorithm}[t]
  \SetKwInOut{Input}{Input}
  \SetKwInOut{Output}{Output}
  \SetKwProg{myproc}{Procedure}{}{}
  \Input{A CDD System $\cc{S}$.}
  \Output{%
    If $\cc{S}$ is solvable, then a solution of
    $\cc{S}$;
    otherwise, the value $\nosoln$.
  }
  \myproc{$\solvecdd(\cc{S})$}{%
    $\sigma \assign \emptyset$ \\
    $\cc{Preds} \assign \topSort(\predof{S})$\label{line:top-sort} \\
    \For {$P \in \cc{Preds}$}{%
      $\cc{interpolant} \assign
        \solveitp(\prectr(P,\sigma),\postctr(P,\sigma))$\label{line:interp} \\
      \Switch{\cc{interpolant}}{%
        \lCase{\cc{SAT}:}{%
          \Return{$\nosoln$}
        }\label{line:interp-sat}
        \lCase{$\cc{I}$:}{%
          $\sigma$ [$P$] $\assign$ $\cc{I}$
        }\label{line:interp-valid}
      }
    }
    \Return{$\sigma$}\label{line:solve-done}
  }
  \caption{$\solvecdd$: for a CDD system $\cc{S}$, returns a
    solution to $\cc{S}$ or the value $\none$ to denote that
    $\cc{S}$ has no solution.}
  \label{alg:solve-cdd}
\end{algorithm}

\autoref{alg:solve-cdd} presents $\solvecdd$, a procedure designed to solve
CDD systems.
Given a CDD system $\cc{S}$, $\solvecdd$ topologically sorts the uninterpreted
predicates in $\cc{S}$ based on their dependency relations
(\autoref{line:top-sort}).
Then, the algorithm calculates interpretations for each predicate in this order
by invoking $\solveitp$ (\autoref{line:interp}).
$\solveitp$ computes a binary interpolant of the pre and post formulas of the
given predicate, where these formulas are based on the current, partial
solution.
The pre and post formulas are computed respectively by $\prectr$ and
$\postctr$, which we define in \autoref{sec:cons-pre} and \autoref{sec:cons-post}.
It is possible that the pre and post formulas may be mutually
satisifiable, in which case $\solveitp$ returns \cc{SAT}
(\autoref{line:interp-sat}). In this case, $\solvecdd$ returns
$\nosoln$ to indicate that $\cc{S}$ is not solvable.
Otherwise, $\solvecdd$ updates the partial solution by setting the
interpretation of $P$ to $\cc{I}$ (\autoref{line:interp-valid}).
Once all predicates have been interpolated, $\solvecdd$ returns the complete
solution, $\sigma$ (\autoref{line:solve-done}).

\begin{ex}
  Given the CDD system $\mcchc$, $\solvecdd$ may generate interpolation
  queries in any topological ordering of the dependency relations.
  One such ordering is $\cc{L}_4$, $\cc{L}_6$, $\cc{L}_8$, $\cc{L}_9$, \cc{dbl},
  \cc{main}.
\end{ex}

\begin{thm}
  \label{thm:solve-cp}
  Given a CDD system $S$ over the theory of linear integer arithmetic,
  $\solvecdd$ either returns the solution of $S$ or $\none$ in co-NP
  time.
\end{thm}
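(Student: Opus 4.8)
The plan is to establish two claims separately: \textbf{(i)} total correctness --- if $\solvecdd(S)$ returns a map $\sigma \neq \none$ then $\sigma$ is a solution of $S$, and if it returns $\none$ then $S$ has no solution; and \textbf{(ii)} the stated complexity bound. I would prove correctness first, since the structural facts it establishes are reused in the complexity argument.

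For correctness I would induct on the topological order fixed at \autoref{line:top-sort}, carrying the invariant that after the iteration processing predicate $P$, the partial map $\sigma$ is defined exactly on the predicates preceding and including $P$, and $\sigma$ can be extended to a full solution of $S$ precisely when $S$ is solvable. The inductive step turns on the three defining properties of an interpolant (\autoref{defn:itps}) for the pair $(\prectr(P,\sigma), \postctr(P,\sigma))$. Property \textbf{(1)} ($\prectr(P,\sigma) \entails I$) makes every clause whose head is $P$ valid once $P$ is interpreted by $I$; property \textbf{(2)} ($I \land \postctr(P,\sigma) \entails \false$) ensures $I$ is strong enough that the clauses transitively depending on $P$, and hence the query, remain satisfiable under the extension. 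Soundness is then the instance of the invariant at termination, when $\sigma$ is total; completeness --- that $\none$ is returned only when no solution exists --- also follows from the invariant, because if the running $\sigma$ extends to a solution then the value that solution assigns to $P$ separates the two formulas, so they are mutually inconsistent and $\solveitp$ cannot report \cc{SAT}.

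The step I expect to be the main obstacle is property \textbf{(3)} of \autoref{defn:itps}: that $\vocab(I) \subseteq \vocab(\prectr(P,\sigma)) \cap \vocab(\postctr(P,\sigma))$ is contained in the formal argument vector of $P$, so that $I$ is a legitimate interpretation of $P$ rather than a formula over foreign variables. This is exactly where the CDD hypothesis is indispensable. Since any sibling $Q$ of $P$ satisfies $\tdepsof{P} \cap \tdepsof{Q} = \emptyset$ and no predicate occurs twice in a single body, the clauses generating $\prectr(P,\sigma)$ (the sub-derivations rooted at $P$) and those generating $\postctr(P,\sigma)$ (everything depending on $P$) share no auxiliary variable --- their only common symbols are the arguments to which $P$ is applied. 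I would make this rigorous by tracking, clause by clause, the variables each contributes to the two formulas and showing dependence-disjointness forbids any non-argument variable from appearing on both sides.

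For the complexity bound I would note that the loop executes exactly $|\predof{S}|$ times and issues a single $\solveitp$ query per predicate, so the number of queries is linear in $S$. Each query amounts to an (un)satisfiability test of a linear-integer-arithmetic formula; moreover, by recursion-freeness together with the CDD structure, every counterexample derivation invokes each predicate at most once, so a witness to \emph{un}solvability of $S$ has size polynomial in $S$ and is checkable in polynomial time. Deciding that $S$ is unsolvable is therefore in NP, which places the decision computed by $\solvecdd$ --- solvability of $S$ --- in co-NP, matching the bound for directly solvable systems~\cite{rummer13b} and completing the argument.
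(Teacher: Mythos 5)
Your proof reaches the right conclusions but is organized quite differently from the paper's. The paper's own proof of \autoref{thm:solve-cp} consists solely of the complexity count: $\prectr$ and $\postctr$ build formulas linear in the size of $S$, satisfiability of such formulas is in NP for linear integer arithmetic, and $\solvecdd$ issues at most linearly many interpolation queries, hence the co-NP bound. Correctness of the returned value is not argued there at all; it is deferred to \autoref{app:corr}, where your inductive argument along the topological order corresponds almost exactly to \autoref{lem:vc} and \autoref{lem:solve-aux} (and their packaging in \autoref{lem:cdd-soln-sound} and \autoref{lem:cdd-soln-complete}). So your correctness half is redundant with respect to this theorem's proof but faithful to the paper's separate lemmas, and your explicit treatment of vocabulary condition \textbf{(3)} of \autoref{defn:itps} --- showing that the CDD hypothesis forces the common vocabulary of $\prectr(P,\sigma)$ and $\postctr(P,\sigma)$ down to the arguments of $P$ --- addresses a point the paper silently assumes; making it explicit is an improvement. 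Your complexity half takes a genuinely different route: instead of counting formula sizes and queries, you exhibit a polynomial-size, polynomially checkable witness of unsolvability, placing the complement in NP. That route arguably patches a looseness in the paper's own count, since $\prectr(P,\sigma)$ embeds previously computed interpolants whose size is not obviously linear in $S$.

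One intermediate claim does need repair: it is \emph{not} true that in a CDD system every counterexample derivation invokes each predicate at most once. \autoref{defn:cdds} requires $\tdepsof{P} \cap \tdepsof{Q} = \emptyset$ for siblings $P$ and $Q$; it does not forbid $P \in \tdepsof{Q}$. The system with query $\false \gets P \land Q \land \varphi_0$, clause $Q \gets P \land \varphi_1$, and fact $P \gets \varphi_2$ is CDD (since $\tdepsof{P} = \emptyset$), yet its unique derivation tree contains two occurrences of $P$. The polynomial bound you need still holds, but it requires the following extra step: if a predicate $P$ occurs twice in a derivation tree, then at the least common ancestor of the two occurrences $P$ must be both a sibling of some body predicate $B$ and a member of $\tdepsof{B}$, which forces $\tdepsof{P} \subseteq \tdepsof{B}$ and hence $\tdepsof{P} = \emptyset$; so only dependency-free predicates can repeat, every repeated occurrence is a leaf, internal nodes number at most $|\predof{S}| + 1$, and the tree is therefore of size polynomial in $S$. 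With that substitution your certificate argument goes through.
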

Proof is given in \autoref{app:solve-cp}.

In order to solve a CDD system, we construct efficiently sized pre and
post formulas for each relational predicate and interpolate over these
formulas. These pre and post formulas are built from \textbf{(1)} the
\emph{constraint} of a given predicate, which explains under what
conditions the predicate holds, and \textbf{(2)} the
\emph{counterexample characterization}, which explains what condition
must be true if the predicate holds.

\subsubsection{Constructing constraints for predicates}
In order to construct efficiently sized pre and post formulas for
relational predicates, we use a method for compactly expressing the
constraints on a given predicate.
For a CDD system $S$, a predicate $P \in \predof{S}$, and a partial
solution $\sigma$ that maps predicates to their solutions, the formula
$\ctrof{P,\sigma}$ is a compact representation of the constraints of
$P$.
If $\sigma$ does not contain $P$, then the constraint of $P$ is
constructed from the clauses where $P$ is the head.  When $\sigma$
does contain $P$, $\ctrof{P,\sigma}$ is a lookup from $\sigma$.
Each $P \in \predof{S}$ has a corresponding boolean variable
$\cc{b}_P$:
\[
  \ctrof{P,\sigma}=
  \begin{dcases}
    \biglor_{(\cc{C}_i \in \cc{S}) \land (\headof{C_i} = P)}
    \left( \consof{C_i} \land %
      \bigland_{ \cc{Q} \in \bodyof{C_i}} \cc{b}_Q
    \right),
  &\text{if } P \notin \sigma\\
  \sigma[P], &\text{if } P \in \sigma
  \end{dcases}
\]
%
The \emph{counterexample characterization} of $P$ is a small extension
of the compact constraint of $P$.
It states that if $P$ is used
(meaning $\cc{b}_{P} = True$), then the constraint of $P$ must hold:
\[
  \vc{P,\sigma} = \neg \cc{b}_{P} \lor \ctrof{P,\sigma}
\]

\begin{ex}
  \label{ex:ctr}
  When $\solvecdd$ solves predicate $\cc{L}_9$ in $\mcchc$, it generates a
  constraint based on clauses (5) and (6):
  $$\ctrof{\cc{L}_9,\sigma} =
    (\cc{abs'} = \cc{n} \land \cc{b}_{\cc{L}_6})
    \lor
    (\cc{abs'} = -\cc{n} \land \cc{b}_{\cc{L}_8})$$
  The counterexample characterization for $\cc{L}_9$ is based on its boolean
  indicator and its constraint:
  $$\vc{\cc{L}_9,\sigma} = \neg \cc{b}_{\cc{L}_9} \lor \ctrof{\cc{L}_9,\sigma}$$
\end{ex}

\subsubsection{Constructing pre-formulas for predicates}
\label{sec:cons-pre}
$\prectr(P, \sigma)$ denotes the pre-formula for an arbitrary predicate $P$
with respect to the partial solution map, $\sigma$.
Due to the topological ordering, when $\solvecdd$ attempts to solve $P$, the
interpretations for all dependencies of $P$ will be stored in
$\sigma$.
The pre-formula is built from these interpretations together with boolean
indicators of the dependencies and the constraint of $P$:
\[
\prectr(P,\sigma) =
  \ctrof{P,\sigma} \land
    \left(
      \bigland_{\cc{Q} \in \depsof{P}}
      \left(\neg b_Q \lor \sigma[ \cc{Q} ]  \right)
    \right)
\]
\begin{ex}
  \label{ex:pre-ctr}
  When $\solvecdd$ solves predicate $\cc{L}_9$ in $\mcchc$,
  $\sigma$ maps $\cc{L}_6$ to $n \ge 0$ and $\cc{L}_8$ to $n < 0$.
  The pre-formula for $\cc{L}_9$ under $\sigma$ is therefore:
  $$\ctrof{\cc{L}_9}
  \land (\neg \cc{b}_{\cc{L}_6} \lor n \ge 0) \land (\neg \cc{b}_{\cc{L}_8}
  \lor n < 0)$$
  The formula $\ctrof{\cc{L}_9,\sigma}$ is given in~\autoref{ex:ctr}.
\end{ex}

\subsubsection{Constructing post-formulas for predicates}
\label{sec:cons-post}
$\postctr(P, \sigma)$ denotes the post-formula for an arbitrary predicate $P$
with respect to the partial solution map, $\sigma$. A valid
post-formula is mutually inconsistent with the solution of $P$.
and is constructed based on the predicates which depend on $P$.
%
Let $D_0$ be the \emph{transitive dependents} of $P$ in $S$ (i.e the
predicates that have $P$ as a transitive dependency),
let $D_1$ be the siblings in $S$ of $(D_0 \union {P})$,
let $D_2$ be all transitive dependencies of $D_1$, and
let $D = D_0 \union D_1 \union D_2$.
The post-formula for $P$ under $\sigma$ is the conjunction of counterexample
characterization of all predicates $Q \in D$ and the query clause:
\[
\postctr(R, \sigma) =\cc{query} \land (\bigland_{\cc{Q} \in D} \vc{Q,\sigma})
\]
\begin{ex}
  \label{ex:ctx-ctr}
  When $\solvecdd$ solves $\cc{L}_9$ in $\mcchc$, it must consider
  the dependents of $\cc{L}_9$.
  The \emph{transitive dependents} $D_0$ of $\cc{L}_9$ is \{\cc{main}\}.
  The siblings set $D_1$ is \{\cc{dbl}\}.
  The set of transitive dependencies of $D_1$ is $\varnothing$. 
  Therefore, $D$ is \{\cc{main},\cc{dbl}\}.
  The \cc{query} is $\cc{b}_{main} \land res<0$.
  The post-formula for $\cc{L}_9$ under $\sigma$ is:
  $$\cc{query} \land \vc{\cc{main},\sigma} \land \vc{\cc{dbl},\sigma}$$
\end{ex}

\subsection{Solving recursion-free systems using CDD systems}
\label{sec:core-solver}

Given a recursion-free CHC system $S$, \shara~constructs a CDD system
$S'$.
\sys then directly solves $S'$ and, from this solution, constructs a solution
for $S$.
For two given recursion-free CHC systems $S$ and $S'$, if
there is a homomorphism from $\predof{S'}$ to $\predof{S}$ that
preserves the relationship between the clauses of $S'$ in
the clauses of $S$, then $S'$ is an \emph{expansion} of
$S$ (all definitions in this section will be over fixed
$S$, and $S'$).
\begin{defn}
  \label{defn:expansion}
  Let $\eta : \predof{S'} \to \predof{S}$ be such that
  \textbf{(1)} for all $P' \in \predof{S'}$, $P'$ has the
  same parameters as $\eta(P')$;
  \textbf{(2)} for each clause $C' \in S'$, the clause $C$, constructed
  by substituting all predicates $P'$ by $\eta(P')$, is in $S$; and
  \textbf{(3)} each predicate $P$ in $S$ has at least one predicate
  $P'$ in $S'$ such that $\eta(P') = P$.
  Then $\eta$ is a \emph{correspondence} from $S'$ to $S$.
\end{defn}
If there is a correspondence from $S'$ to $S$, then $S'$ is an \emph{expansion}
of $S$, denoted $S \expandsto S'$.

\begin{defn}
  \label{defn:min-expansion}
  If $S'$ is CDD, $S \expandsto S'$, and there is no CDD system $S''$ such that $S \expandsto S' \expandsto S''$ and
  $S'' \neq S'$, then $S'$ is a minimal CDD expansion of $S$.
\end{defn}

\begin{algorithm}[t]
  \SetKwInOut{Input}{Input}
  \SetKwInOut{Output}{Output}
  \SetKwProg{myproc}{Procedure}{}{}
  \Input{A recursion-free CHC system $S$.}
  \Output{A solution to $S$ or $\none$.}
  \myproc{$\sys(S)$ %
    \label{line:shara-begin}}{ %
    $(S', \eta) \assign %
    \expand(S)$ \label{line:shara-expand} \; %
    \Switch{ $\solvecdd(S' )$ %
      \label{line:shara-case} }{
      \lCase{ $\none$: }{ \Return{$\none$} \label{line:shara-ret-none} } %
      \lCase{ $\sigma'$: }{ %
        \Return{ $\collapse{ \eta }{ \sigma' }$ } %
        \label{line:shara-ret-collapse}
      } %
    } %
  } %
  \caption{\sys: a solver for recursion-free CHCs, which uses
    procedures $\expand{}$ (see \autoref{app:cons-cdd}) and
    $\solvecdd$ (see
    \autoref{sec:solve-cdd}). }
  \label{alg:shara}
\end{algorithm}
\sys (\autoref{alg:shara}), given a recursion-free CHC system $S$
(\autoref{line:shara-begin}), 
returns a solution
to $S$ or the value $\none$ to denote that $S$ is
unsolvable.
\sys first runs a procedure $\expand$ on $S$ to obtain a CDD expansion
$S'$ of $S$ ($\expand$ is given in \autoref{app:cons-cdd}).
\sys then invokes $\solvecdd$ on $S'$.
When \solvecdd~returns that $S'$ has
no solution, \sys~propagates $\none$ (\autoref{line:shara-ret-none}).
Otherwise, \sys constructs a solution from the CDD solution,
$\sigma'$, by invoking $\collapse{ \eta }{ \sigma' }$
(\autoref{line:shara-ret-collapse}).
$\textsc{Collapse}$ is designed to convert the solution for the CDD system
back to a solution for the original problem. It does this by
taking the conjunction of all interpretations of predicates which
correspond to the same predicate in the original problem.
That is, given a CDD solution $\sigma'$ and a correspondence $\eta$ from
$P' \in \predof{S'}$ to $P \in \predof{S}$,
$\collapse{\eta}{\sigma'}$ generates an entry in $\sigma$ for each
predicate in the original system:
$\sigma[P] \assign \bigland_{ \eta(P') = P} \sigma'(P')$.

\begin{thm}
  \label{thm:corr}
  $S$ is solvable if and only if $\shara$ returns a solution $\sigma$.
\end{thm}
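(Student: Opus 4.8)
The plan is to decompose the biconditional into a soundness claim (whenever $\shara$ returns a non-$\none$ value, that value is a genuine solution of $S$) and a completeness claim (if $S$ is solvable then $\shara$ returns a non-$\none$ value). Both rest on a single central lemma: for the CDD expansion $S'$ and correspondence $\eta$ produced by $\expand$ (so that $S \expandsto S'$), the system $S$ is solvable if and only if $S'$ is, and moreover the two solution spaces are linked by explicit transfers---the pullback $P' \mapsto \sigma(\eta(P'))$ carries a solution of $S$ to one of $S'$, and $\collapse{\eta}{\cdot}$ carries a solution of $S'$ back to one of $S$. Given this lemma, the theorem follows by chaining it with \autoref{thm:solve-cp}, which guarantees that $\solvecdd(S')$ returns an actual solution of the (CDD) system $S'$ exactly when one exists and $\none$ otherwise.

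For completeness I would assume $\sigma$ solves $S$ and define $\sigma'(P') := \sigma(\eta(P'))$ for every $P' \in \predof{S'}$. Fixing a clause $C' \in S'$, condition \textbf{(2)} of \autoref{defn:expansion} puts the substituted clause $C := \eta(C')$ in $S$ with the same constraint, and its body predicates are exactly the $\eta$-images of those of $C'$ (with matching parameters by condition \textbf{(1)}). Validity of $C$ under $\sigma$ then transfers literally to validity of $C'$ under $\sigma'$, and the query clause is handled identically. Hence $S'$ is solvable, so by \autoref{thm:solve-cp} $\solvecdd$ returns a solution $\sigma'$ rather than $\none$, and $\shara$ returns $\collapse{\eta}{\sigma'}$.

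For soundness I would start from a solution $\sigma'$ of $S'$ (its validity is exactly what \autoref{thm:solve-cp} guarantees) and show that $\sigma := \collapse{\eta}{\sigma'}$, defined by $\sigma[P] = \bigland_{\eta(P') = P} \sigma'(P')$, validates every clause $C \in S$. Writing $H = \headof{C}$, the only nontrivial requirement is $\bigland_{Q \in \bodyof{C}} \sigma[Q] \land \consof{C} \entails \sigma[H]$, and since $\sigma[H]$ is a conjunction over the copies of $H$, it suffices to establish this entailment with $\sigma[H]$ replaced by $\sigma'(H')$ for each copy $H'$ satisfying $\eta(H') = H$. The idea is to locate a clause $C' \in S'$ whose head is that particular copy $H'$ and with $\eta(C') = C$; its body consists of copies $Q'$ of the $Q \in \bodyof{C}$, each $\sigma[Q]$ entails the corresponding $\sigma'(Q')$ by conjunction elimination, and validity of $C'$ under $\sigma'$ closes the argument (the query clause is the same with an empty head-copy set).

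I expect the main obstacle to be precisely the existence of such a covering clause $C'$: the correspondence axioms of \autoref{defn:expansion} only guarantee that every clause of $S'$ maps down into $S$ and that every predicate of $S$ is hit, \emph{not} that every (clause, head-copy) pair of $S$ is realized in $S'$. That surjectivity-on-clauses property is a structural invariant of the specific construction $\expand$, so the crux of the soundness direction is to extract it from the definition of $\expand$ in \autoref{app:cons-cdd} (intuitively, $\expand$ replicates a clause together with all of its dependencies, so no copy of a head is ever left without its defining clauses). Once that invariant is in hand, the two directions combine to yield the biconditional.
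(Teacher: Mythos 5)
Your proposal is correct and follows essentially the same decomposition as the paper: solvability of $S$ transfers to the CDD expansion $S'$ via the pullback $P' \mapsto \sigma(\eta(P'))$ (the paper's \autoref{lem:expansion-complete}), $\solvecdd$ is sound and complete on CDD systems (\autoref{lem:cdd-soln-sound} and \autoref{lem:cdd-soln-complete}), and $\collapse{\eta}{\sigma'}$ transfers the solution back (\autoref{lem:expansion-sound}). The covering-clause obstacle you flag in the soundness direction is genuine---\autoref{defn:expansion} alone does not guarantee that every head-copy $H'$ retains a defining clause mapping onto each clause of $S$ with head $\eta(H')$, and without it $\sigma'(H')$ could be too strong for the collapsed entailment---but the paper's own proof of \autoref{lem:expansion-sound} elides exactly this point, so your treatment is, if anything, the more careful of the two.
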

Proof is given in \autoref{app:corr}.


\section{Evaluation}
\label{sec:evaluation}
We performed an empirical evaluation of \sys to determine how it
compares to existing CHC solvers.
To do so, we implemented \sys as a modification of \duality CHC
solver, which is included in the \zthree theorem prover~\cite{z3}.
We modified \duality to use \sys as its solver for recursion-free CHC
systems.
We modified the algorithm used by \duality to generate recursion-free
unwindings of a given recursive system so that, in each iteration, it
generates an unwinding which is converted to CDD form.
In the following context, ``\sys'' refers to this modified version of
\duality.

We evaluated \sys and an unmodified version of \duality on 4,309 CHC
systems generated from programs in the SV-COMP 2015~\cite{svcomp15}
verification benchmark suite.
To generate CHC systems, we ran the \seahorn~\cite{gurfinkel15}
verification framework with its default settings (procedures are not
inlined and each loop-free fragment is a clause), set to timeout at 90
seconds.
We used the benchmarks in SV-COMP 2015~\cite{svcomp15} because they
were used to evaluate \duality in previous work~\cite{mcmillan14}.

We also compared \sys to the \eldarica CHC solver, but
\eldarica could not parse the CHC systems generated by \seahorn.
We compared \sys and \eldarica on an alternative set of benchmarks
generated by the UFO model checker~\cite{albarghouthi12c}, and found
that \sys outperformed \eldarica by at least an order of magnitude on
an overwhelming number of cases.
As a result, we focus our discussion on a comparison of \sys and
\duality.

All experiments were run using a single thread 
on a machine with 16 1.4 GHz processors and
128 GB of RAM.
We ran the solvers on each benchmark, timing out each implementation
after 180 seconds.

Out of 4,309 benchmarks, \sys solved or refuted 2,408 while \duality
solved or refuted 2,321. \sys timed out on 762 benchmarks and \duality
timed out on 1,145.
On the remaining benchmarks, some constraint caused \zthree's
interpolating theorem prover to fail, meaning the result was neither a
solve nor a refutation. \sys reached this failure on 1,139 benchmarks
while \duality failed on 843.
The two solvers can induce a failure in \zthree on different systems
because in attempting to solve a given system, they generate different
interpolation queries.

\begin{figure}[t]
  \centering
  \begin{floatrow}[2]
    \ffigbox[.48\textwidth] %
    {\caption{Solving times of \sys vs. \duality.
        The $x$ and $y$ axes range over the solving times in seconds
        of \sys and \duality, respectively.
        Each point depicts the performance of a benchmark.
        The line $y = x$ is shown in red.} %
      \label{fig:complete-data} }
    { \includegraphics[width=\linewidth]{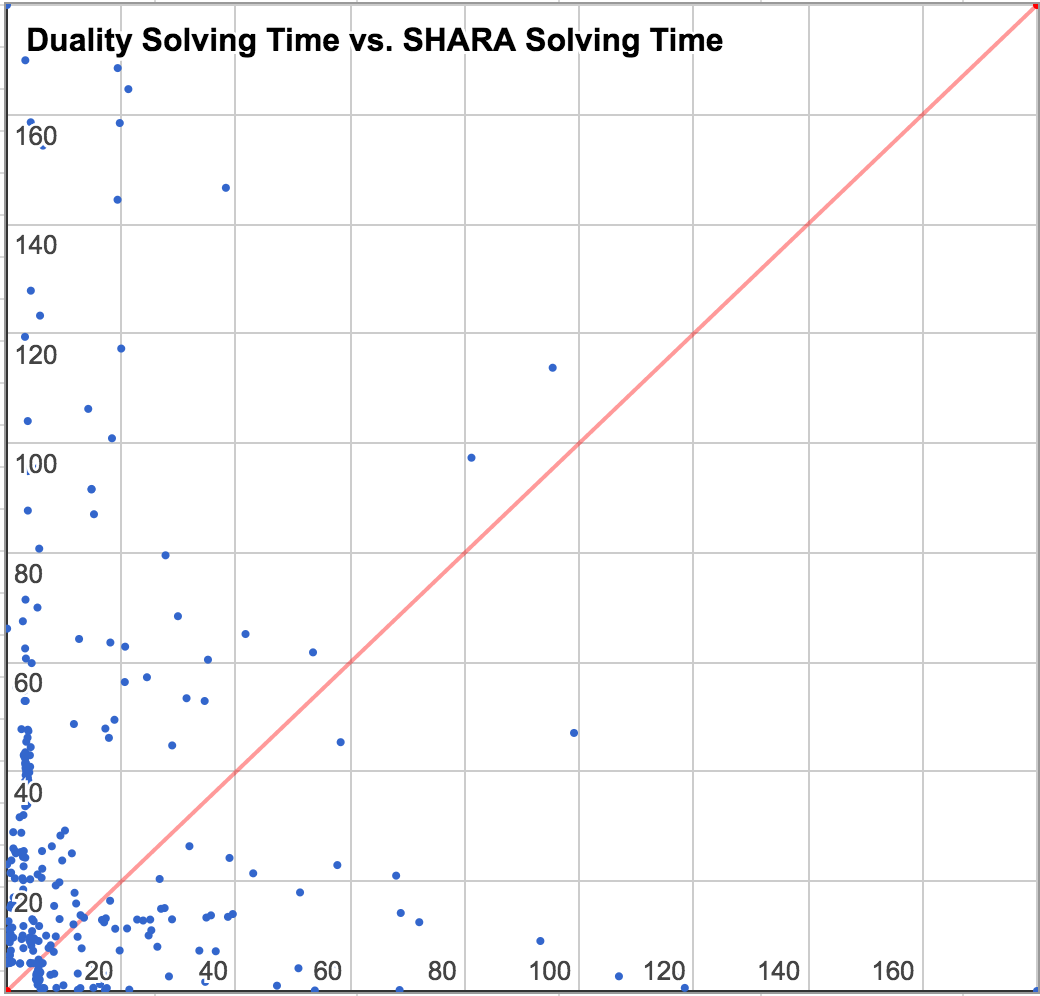} }
    \ffigbox[.48\textwidth] %
    { \caption{Times of \sys and \duality vs. system size.
        The $x$-axis ranges over the size of a given system, and the
        $y$-axis ranges over solvers' times.
        Measurements of \sys and \duality are shown in blue and
        red, respectively. } %
      \label{fig:size} }
     { \includegraphics[width=\linewidth]{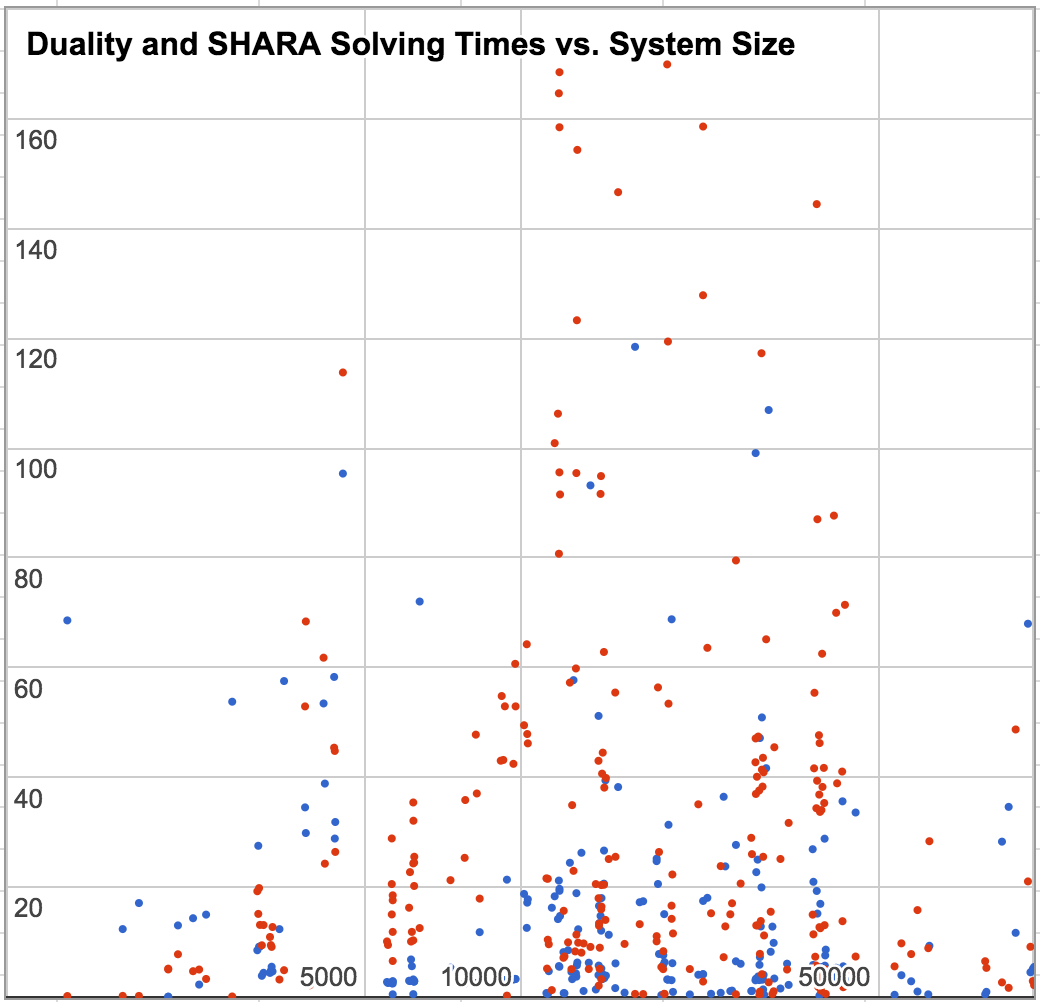} }
  \end{floatrow}
\end{figure}
The results of our evaluation are shown in
\autoref{fig:complete-data}.
Of the 4,040 benchmarks on which both solvers took a short amount
of time---less than five seconds---\sys solved the benchmarks in an
average of 0.51 seconds and \duality solved them in an average of 0.42
seconds.
\autoref{fig:complete-data} contains data for benchmarks which took
longer than five second for both systems to solve.
Out of these 269 benchmarks, \sys solved 185 in less time
than \duality, and solved 159 in less than half the time of \duality.
\duality solved 84 in less time than \sys, and solved 53 in less than
half the time of \sys.
Of the 762 benchmarks on which \sys timed out, \duality solved or
found a counterexample to 185.
Of the 1,145 benchmarks on which \duality timed out, \sys solved or
found a counterexample to 470.

\autoref{fig:size} shows the relationship between the solving times of
\duality and \sys and the size of a given system, measured as lines of
code in the format generated by \seahorn.
The majority of files have between 1,000 and 100,000 lines, so
\autoref{fig:size} is restricted to this range.
The data indicates that the performance improvement of \sys compared
to \duality is consistent across systems of all sizes available.

The results indicate that \textbf{(1)} on a significant number of
different verification problems, \sys can perform significantly better than
\duality, but
\textbf{(2)} there are some cases in which the strengths of each
algorithm yield better results.
Specifically, the unmodified version of \duality~uses a technique
called lazy annotation to avoid enumerating all derivation trees.
\sys~does not use lazy annotation.
We collected the differences between sizes of a given system and its
minimal CDD expansion generated by \sys and found that they were
independent of \sys's performance compared to \duality.
Thus, while \duality may in the worst case enumerate exponentially
many derivation trees, it appears to enumerate far fewer
than the worst-case bound in some cases, causing it to perform better
than \sys.
Our results indicate that a third approach that combines the strengths
of both \duality and \sys, perhaps by \emph{lazily} unwinding a given
system into a series of CDD systems, could yield further improvements.


\section{Related Work}
\label{sec:related-work}
A significant body of previous work has presented solvers for
different classes of Constrained Horn Clauses, or finding inductive
invariants of programs that correspond to solutions of CHCs.
\impact attempts to verify a given sequential procedure by iteratively
selecting paths and synthesizing invariants for each path. This
approach corresponds to solving a recursive linear CHC
system~\cite{mcmillan06}.

Previous work also proposed a verifier for recursive
programs~\cite{heizmann10}.
The proposed approach selects interprocedural paths of a program and
synthesizes invariants for each as nested interpolants.
Such an approach corresponds to attempting to solve a recursive CHC
system $S$ by selecting derivation trees of $S$
and solving each tree.

Previous work has proposed solvers for recursive systems that, given a
system $S$, attempt to solve $S$ by generating and
solving a series of recursion-free unwindings of $S$.
In particular, \eldarica attempts to solve each unwinding
$S'$ by reducing to and solving body-disjoint systems~\cite{rummer13a,rummer13b}.
\duality attempts to avoid solving all derivation-trees (i.e
body-disjoint systems) by using lazy annotation~\cite{bjorner13}.
Other optimizations select derivation trees to solve using
symbolic analogs of Prolog evaluation with
tabling~\cite{jaffar09,mcmillan14}.

\whale attempts to verify sequential recursive programs by generating
and solving hierarchical programs, which correspond to recursion-free CHC
systems~\cite{albarghouthi12b}.
To solve a particular recursion-free system, \whale solves a linear
inlining of the input using a procedure named
\vinta~\cite{albarghouthi12a}.
In general, the linear inlining may be exponentially larger than the
input.

\sys is similar to the recursion-free CHC approaches given above in
that it reduces the problem to solving a CHC system in a
directly-solvable class.
\sys is distinct in that it reduces to solving Clause-Dependent
Disjoint (CDD) systems.
As discussed in \autoref{sec:overview}, the class of CDD systems is a
superset of classes used by the approaches above. CDD systems can
also be solved directly.

\sys solves general CHC systems using the same strategy as proposed by
the above approaches.
Specifically, it solves a series of recursion-free unwindings of the
original system, and tries to synthesize a general solution from the
recursion-free solutions.

Previous work describes solvers for non-linear Horn clauses over
particular theories.
In particular, verifiers have been proposed for recursion-free systems
over the theory of linear arithmetic~\cite{komuravelli14}.
Because the verifier relies on quantifier elimination, it is not clear
if it can be extended to richer theories that support interpolation,
such as the combination of linear arithmetic with uninterpreted
functions.
Other work describes a solver for the class of \emph{timed pushdown
systems}, a subclass of CHC systems over the theory of linear real
arithmetic~\cite{hoder12}.
Unlike these approaches, \sys can solve systems over any theory that
supports interpolation.

DAG inlining attempts to generate compact verification conditions for
hierarchical programs~\cite{lal-qadeer15}.
\sys attempts to solve recursion-free CHC systems by reducing them to
compact CDD systems.
Because hierarchical programs and recursion-free CHC systems are
closely related, algorithms that operate on hierarchical programs
correspond to algorithms that operate on recursion-free Horn Clauses.
However, it is not apparent whether such algorithms can be used directly
to synthesize solutions.

\bibliographystyle{eptcs}
\small

\bibliography{p}

\clearpage
\appendix

\section{Generating a Minimal CDD Expansion}
\label{app:cons-cdd}
\begin{algorithm}[t]
  \SetKwInOut{Input}{Input}
  \SetKwInOut{Output}{Output}
  \SetKwProg{myproc}{Procedure}{}{}
  \Input{A recursion-free CHC system $S$.}
  \Output{A minimal CDD expansion $S'$ of $S$ and
    a correspondence from $S'$ to $S$.} %
  \myproc{$\expand(S)$ %
    \label{line:expand-begin} }{ %
    \myproc{$\expandaux(S')$
      \label{line:expand-aux-begin} }{ %
      \Switch{$\sharingclause(S')$}{ %
        \lCase{$\none$:}{ \Return{$S'$} %
          \label{line:expand-ret} } %
        \lCase{$C \in S', %
          P \in \predof{S'}$: }{ %
          \Return{$\expandaux( %
            \copyrel(S', C, P))$ } %
          \label{line:expand-recurse}
        } %
      } %
    \label{line:expand-aux-end} } %
    \Return{$( \expandaux(S), \corr )$ %
      \label{line:expand-base-call}}
  } %
  \caption{$\expand$:
    given a recursion-free CHC system $S$, returns a minimal
    CDD expansion $S'$ of $S$ and its correspondence.}
  \label{alg:expand}
\end{algorithm}
%
Given a recursion-free CHC system $S$, \autoref{alg:expand} returns a
minimal CDD expansion of $S$ (\autoref{defn:min-expansion}).
$\expand$ defines a procedure $\expandaux$
(\autoref{line:expand-aux-begin}---\autoref{line:expand-aux-end}) that
takes a CHC system $S$ and returns a minimal CDD expansion of $S$.
$\expand$ runs $\expandaux$ on $S$ and
returns the result, paired with the map $\corr: \predof{S'} \to \predof{S}$ 
(\autoref{line:expand-base-call}).

$\expandaux$, given a recursion-free CHC system $S'$,
runs a procedure $\sharingclause$ on $S'$, which tries to
find a clause $C \in S'$ and a predicate $P \in \bodyof{C}$
such that $P$ is in the transitive dependencies of two sibling
predicates.
In such a case, we say that $(C, P)$ is a
\emph{sibling-shared dependency}.

If $\sharingclause$ determines that no sibling-shared dependency
exists, then $\expandaux$ returns $S'$
(\autoref{line:expand-ret}).

Otherwise, $\sharingclause$ must have located a sibling-shared
dependency $(C, P)$. In this case, $\expandaux$ runs $\copyrel $ on
$S'$, $C$, and $P$, which returns an expansion of $S'$ by creating a
fresh copy of $P$ and updating $\bodyof{C}$ to avoid the shared
dependency.
$\expandaux$ recurses on this expansion and returns the result
(\autoref{line:expand-recurse}).
%

$\expand$ always returns a CDD expansion of its input
(see \autoref{app:corr}, \autoref{lem:expand-corr}) that is minimal.
$\expand$ is certainly not unique as an algorithm for
generating a minimal CDD expansion.
In particular, feasible variations of $\expand$ can be
generated from different implementations of $\sharingclause$, each of
which chooses clause-relation pairs to return based on different
heuristics.
We expect that other expansion algorithms can also be developed by
generalizing algorithms introduced in previous work on generating
compact verification conditions of hierarchical
programs~\cite{lal-qadeer15}.

\section{Proof of characterization of CDD systems}
\label{app:char}
The following is a proof of \autoref{thm:cdd-contains}.
\begin{proof}
  To prove that CDD is a strict superset of the union of the class
  of linear systems and the class of body-disjoint systems, we prove
  \textbf{(1)} CDD contains the class of linear systems,
  \textbf{(2)} CDD contains the class of body-disjoint systems, and
  \textbf{(3)} there is some CDD system that is neither linear nor
  body-disjoint.

  For goal \textbf{(1)}, let $S$ be an arbitrary linear
  system.
  $S$ is CDD if for each clause $C$ in
  $S$ \textbf{(1)} and each pair of distinct predicates in
  the body of $C$ has disjoint transitive dependencies and
  \textbf{(2)} no predicate appears more than once in the body of $C$.
  (\autoref{defn:cdds}).
  Let $C$ be an arbitrary clause in $S$.
  Since $C$ is a linear clause, it has at most one relational
  predicate in its body. And since the system is recursion-free, the
  transitive dependencies are trivially disjoint and there can be no
  repeated predicate.
  Therefore, $S$ is CDD.

  For goal \textbf{(2)}, let $S$ be an arbitrary
  body-disjoint system.
  The dependence relation of $S$ is a tree $T$, by the
  definition of a body-disjoint system.
  Let $C$ be an arbitrary clause in $S$, with
  distinct relational predicates $R_0$ and $R_1$ in its body.
  All dependencies of $R_0$ and $R_1$ are in subtrees of $T$, which
  are disjoint by the definition of a tree.
  Thus, $S$ is CDD, by \autoref{defn:cdds}.

  For goal \textbf{(3)}, the system $\mcchc$ is CDD, but is neither linear
  nor body-disjoint.
\end{proof}

\section{Proof $\solvecdd$ is in co-NP}
\label{app:solve-cp}
The following is a proof of \autoref{thm:solve-cp}.
Namely, $\solvecdd$ is in co-NP.
\begin{proof}
  $\prectr$ and $\postctr$ construct formulas linear in the size of the
  CHC system. The satisfiability problem for the constructed formulas
  are in NP for linear arithmetic.
  $\solvecdd$ issues (at worst) a linear number of interpolation
  queries in terms of number of predicate.
  Therefore, the upper bound of $\solvecdd$ is co-NP.
\end{proof}

\section{Proof of Correctness}
\label{app:corr}
In this section, we prove that \sys is correct when applied to
recursion-free CHC systems.
We first establish lemmas for the correctness of each procedure used
by \sys, namely \textsc{Collapse} (\autoref{lem:expansion-sound} and
\autoref{lem:expansion-complete}), $\expand$
(\autoref{lem:expand-corr}), and $\solvecdd$
(\autoref{lem:vc}, \autoref{lem:cdd-soln-sound}, and
\autoref{lem:cdd-soln-complete}).
We combine the lemmas to prove \sys is correct (\autoref{thm:corr}).

For two recursion-free CHC systems $S$ and $S'$, if $S'$ is
an expansion of $S$, then the result of collapsing a
solution of $S'$ is a solution of $S$.
\begin{lem}
  \label{lem:expansion-sound}
  For two recursion-free CHC system $S'$ and $S$ such that $\sigma'$
  is a solution of $S'$ and $\eta$ is a correspondence from
  $S'$ to $S$, $\collapse{\eta}{\sigma'}$ is a
  solution of $S$.
\end{lem}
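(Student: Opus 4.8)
The plan is to prove the lemma directly from the definitions of correspondence (\autoref{defn:expansion}) and of $\collapse{\eta}{\sigma'}$. Write $\sigma = \collapse{\eta}{\sigma'}$, so that for each predicate $P \in \predof{S}$ we have $\sigma[P] = \bigland_{\eta(P') = P} \sigma'(P')$. The goal is to show that $\sigma$ validates every clause of $S$; that is, for each $C \in S$, substituting each predicate by its $\sigma$-interpretation makes the implication $\headof{C} \gets \bodyof{C} \land \consof{C}$ valid. The natural strategy is to fix an arbitrary clause $C \in S$ and lift it back to a clause in $S'$ whose validity under $\sigma'$ we already know, then push that validity forward through $\eta$.

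First I would fix an arbitrary non-query clause $C \in S$ and argue that, by property \textbf{(3)} of the correspondence, $\headof{C}$ has at least one preimage, and more usefully, that I can find a clause $C' \in S'$ with $\eta$ mapping $C'$ to $C$ (using property \textbf{(2)}: every $C'$ maps to a clause of $S$, and I want the converse direction for the head predicate). The cleaner route is to reason semantically: assume a model in which the body of $C$ holds under $\sigma$, i.e. $\consof{C}$ holds and for each $Q \in \bodyof{C}$ the formula $\sigma[Q]$ holds. Since $\sigma[Q] = \bigland_{\eta(Q') = Q}\sigma'(Q')$, in particular $\sigma'(Q')$ holds for \emph{every} preimage $Q'$ of $Q$. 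Then for any clause $C' \in S'$ with $\eta(C') = C$ (so $C'$ has the same constraint and body predicates up to $\eta$, by property \textbf{(1)} and the fact that corresponding clauses share parameters), the body of $C'$ holds under $\sigma'$; because $\sigma'$ solves $S'$, the head $\sigma'(\headof{C'})$ must hold. Finally, since $\sigma[\headof{C}] = \bigland_{\eta(P')=\headof{C}} \sigma'(P')$ is a conjunction over \emph{all} preimages, I must show each conjunct holds, which requires ranging over every $C'$ mapping to $C$ rather than a single one.

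The key step, and the main obstacle, is exactly this universal quantification over preimages: $\sigma[\headof{C}]$ is a conjunction of $\sigma'(P')$ over all $P'$ with $\eta(P') = \headof{C}$, so to establish it I must show that \emph{for each} such $P'$ there is a clause $C' \in S'$ with $\headof{C'} = P'$ and $\eta(C') = C$ whose body is satisfied under $\sigma'$. This is where property \textbf{(2)} of \autoref{defn:expansion} does the real work: every clause of $S'$ corresponds under $\eta$ to a clause of $S$, and I will need the structural fact that the preimage clauses collectively cover all head-preimages with bodies whose predicates are themselves preimages of $\bodyof{C}$. I would verify that when the body predicates $Q$ of $C$ satisfy $\sigma[Q]$, then in particular every preimage $Q'$ appearing in any such $C'$ satisfies $\sigma'(Q')$, which is immediate because $\sigma[Q]$ is the conjunction over all preimages. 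The query clause (head $\false$) is handled by the same argument with the head conjunction empty/vacuous, i.e. showing the body is unsatisfiable under $\sigma$ by transporting unsatisfiability from the corresponding query clause of $S'$. I expect the bookkeeping of matching up the parameter vectors via property \textbf{(1)} and confirming that substitution commutes with $\eta$ to be routine but is where care is needed.
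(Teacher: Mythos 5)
Your overall strategy---verify each clause of $S$ under $\sigma = \collapse{\eta}{\sigma'}$ by lifting it to clauses of $S'$---is the natural one and is essentially a cleaned-up, clause-by-clause version of what the paper does (the paper instead ranges over predicates $P'$ of $S'$ and clauses containing $P'$ in the body, and pushes entailments down through $\eta$; the two directions amount to the same verification). You have also correctly located the crux: since $\sigma[\headof{C}]$ is the conjunction of $\sigma'(P')$ over \emph{all} preimages $P'$ of $\headof{C}$, you must exhibit, for each such $P'$, a clause $C' \in S'$ with $\headof{C'} = P'$ whose $\eta$-image is $C$ and whose body is satisfied under $\sigma'$.

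However, the step where you say property \textbf{(2)} of \autoref{defn:expansion} ``does the real work'' is a genuine gap. Property \textbf{(2)} is a forward condition---every clause of $S'$ maps into $S$---and property \textbf{(3)} gives surjectivity of $\eta$ only on predicates, not on clauses, let alone on (clause, head-preimage) pairs. Neither yields the covering fact you need, and without it the argument fails. Concretely, take $S = \{P(x) \gets x{=}0,\ \false \gets P(x) \land x{\neq}0\}$ and $S' = \{P_1(x) \gets x{=}0,\ \false \gets P_2(x) \land x{\neq}0\}$ with $\eta(P_1)=\eta(P_2)=P$: this $\eta$ satisfies \textbf{(1)}--\textbf{(3)}, and $\sigma'(P_1) = (x{=}0)$, $\sigma'(P_2) = \false$ solves $S'$, yet $\collapse{\eta}{\sigma'}$ sends $P$ to $\false$, which does not validate $P(x) \gets x{=}0$. (A clause of $S$ with no $\eta$-preimage at all causes a similar failure, and your proof never handles such clauses.) Closing the argument requires the additional structural property that for every clause $C \in S$ and every $P'$ with $\eta(P') = \headof{C}$ there is a clause of $S'$ with head $P'$ mapping to $C$---a property that holds for the expansions actually produced by $\expand$ and $\copyrel$ but is not part of \autoref{defn:expansion}. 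To be fair, the paper's own proof silently assumes the same fact, so you have isolated a real weakness in the definition rather than overlooked an available argument; but as written your proof cannot be completed from the stated hypotheses.
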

\begin{proof}
  %
  For each predicate $P' \in \predof{S'}$ such that $\eta(P') = P$,
  there must exist some clause $C' \in S'$ such that $P' \in \bodyof{C'}$
  because $S'$ is an expansion of $S$.
  Let predicate $Q' \in \predof{S'}$ be the head of $C'$.
  $\sigma'[P'] \land \consof{C'} \entails \sigma'[Q']$ by
  the fact that $\sigma'$ is a solution of $S'$.
  Therefore,
  \[ \collapse{ \eta }{ \sigma' }[P] \land \consof{C'} \entails %
  \left( \bigland_{ \substack{Q' \in \predof{S'} \\ \eta(Q') = Q} }
  \sigma(Q') = %
  \collapse{\eta}{\sigma'}[Q'] \right)
  \]
  Therefore, $\collapse{\eta}{\sigma'}$ has a solution for $P$.
  Since $\collapse{\eta}{\sigma'}$ has a solution for each predicate
  in $S$, $\collapse{\eta}{\sigma'}$ is a solution of $S$.
\end{proof}

Every expansion of a solvable recursion-free CHC system is also
solvable.
\begin{lem}
  \label{lem:expansion-complete}
  If a recursion-free CHC system $S$ is solvable and
  $S'$ is an expansion of $S$, then $S'$ is solvable.
\end{lem}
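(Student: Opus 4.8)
The plan is to construct an explicit solution of $S'$ by pulling a solution of $S$ back along the correspondence; this is the dual construction to the one used in \autoref{lem:expansion-sound}. Since $S'$ is an expansion of $S$, by \autoref{defn:expansion} there is a correspondence $\eta : \predof{S'} \to \predof{S}$, and by hypothesis there is a solution $\sigma$ of $S$. I would define the candidate solution $\sigma'$ of $S'$ by setting $\sigma'[P'] \assign \sigma[\eta(P')]$ for every $P' \in \predof{S'}$. This is well defined as an interpretation for $P'$ because property (1) of \autoref{defn:expansion} guarantees that $P'$ and $\eta(P')$ have the same parameters, so $\sigma[\eta(P')]$ carries the correct vocabulary.

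Next I would verify that $\sigma'$ is a solution of $S'$, i.e.\ that each clause of $S'$ becomes valid once every predicate is replaced by its interpretation under $\sigma'$. Fix an arbitrary clause $C' \in S'$ and let $C$ be the clause obtained by replacing each predicate $P'$ occurring in $C'$ by $\eta(P')$; property (2) of \autoref{defn:expansion} guarantees $C \in S$. The key observation is that substituting $\sigma'$ into $C'$ yields exactly the same formula as substituting $\sigma$ into $C$: an occurrence $P'(\vec{v})$ in $C'$ becomes $\sigma'[P'](\vec{v}) = \sigma[\eta(P')](\vec{v})$, while the corresponding occurrence $\eta(P')(\vec{v})$ in $C$ becomes $\sigma[\eta(P')](\vec{v})$, and the constraint $\consof{C'}$ is untouched by the predicate renaming, so $\consof{C'} = \consof{C}$. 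Because $\sigma$ is a solution of $S$, the clause $C$ is valid under $\sigma$; hence $C'$ is valid under $\sigma'$. As $C'$ was arbitrary, $\sigma'$ satisfies every clause of $S'$, and therefore $S'$ is solvable.

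The argument is essentially bookkeeping, and I expect the only delicate point to be confirming that the two substituted clauses genuinely coincide when several distinct predicates of $S'$ collapse onto a single predicate of $S$ under $\eta$ (exactly the situation an expansion creates, e.g.\ multiple copies of $\cc{L}_4$). Within a single clause this causes no difficulty, since each occurrence is rewritten independently and property (1) keeps the argument vectors aligned. I would also handle the query clause in the same uniform way, noting that validity there means the substituted body entails $\false$, a property transported from $C$ to $C'$ by the same identity of formulas. Notably, neither recursion-freeness nor any CDD structure is invoked; only the three correspondence conditions are used.
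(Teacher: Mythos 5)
Your proposal is correct and follows exactly the paper's own argument: the paper likewise defines $\sigma'(P') = \sigma(\eta(P'))$ and asserts it is a solution of $S'$. Your version merely spells out the clause-by-clause verification that the paper leaves implicit, which is a welcome but not substantively different addition.
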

\begin{proof}
  Let $\sigma$ be a solution of $S$, and let $\eta$ be a
  correspondence from $S'$ to $S$.
  Let $\sigma'$ be such that for each $P' \in \predof{S'}$,
  $\sigma'(P') = \sigma(\eta(P'))$.
  Then $\sigma'$ is a solution of $S'$.
\end{proof}

$\expand$ always returns a CDD expansion of its input.
\begin{lem}
  \label{lem:expand-corr}
  For two recursion-free CHC systems $S$ and $S'$ and a correspondence
  from $S'$ to $S$, $\eta$, such that $(S', \eta) = \expand(S)$,
  $S'$ is a CDD system and an expansion of $S$.
\end{lem}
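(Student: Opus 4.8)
The plan is to prove Lemma~\ref{lem:expand-corr} by induction on the recursive structure of $\expand$, specifically on the recursion depth of the inner procedure $\expandaux$. The statement couples two claims: that $S'$ is CDD and that $S'$ is an expansion of $S$ (equivalently, that the returned $\corr$ is a correspondence per \autoref{defn:expansion}). I would treat these two claims separately, since the CDD property is essentially a termination-condition argument while the expansion property is a loop invariant that must be shown to be preserved by $\copyrel$ at each recursive step.

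First I would establish the expansion property as an invariant. The base case is that $S$ is trivially an expansion of itself via the identity correspondence (checking the three conditions of \autoref{defn:expansion} is immediate: identical parameters, every clause maps to itself, and surjectivity holds). For the inductive step, I would argue that $\copyrel(S', C, P)$ produces a system $S''$ together with an extended map such that $S' \expandsto S''$: the procedure creates a fresh copy $P''$ of $P$ (so $P''$ has the same parameters as $P$, and we set the correspondence of $P''$ to agree with that of $P$) and redirects one occurrence of $P$ in $\bodyof{C}$ to $P''$. I would verify that each of the three correspondence conditions is preserved: the fresh predicate inherits the parameters of its image, every clause of $S''$ still maps to a clause of $S'$ (hence, by transitivity of $\expandsto$ and composition of correspondences, to a clause of $S$), and surjectivity is maintained because $P$ is still covered. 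Since $\expandsto$ is transitive (composition of correspondences is a correspondence), chaining these steps shows the final output is an expansion of $S$.

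Next I would establish that the returned system is CDD. The key observation is that $\expandaux$ returns only at \autoref{line:expand-ret}, which fires exactly when $\sharingclause(S')$ reports $\none$, i.e.\ when no sibling-shared dependency $(C,P)$ exists. I would argue that the absence of any sibling-shared dependency is equivalent to the defining conditions of \autoref{defn:cdds}: for every clause and every sibling pair in its body, their transitive dependencies are disjoint. I would also need to confirm that $\copyrel$ never introduces a repeated predicate in a body and that the expansion preserves recursion-freeness (so that transitive dependencies remain well-defined). This requires checking that $\sharingclause$'s notion of sibling-shared dependency captures precisely the violation of the first CDD condition, and that the second CDD condition (no predicate appearing twice in one body) is maintained by $\copyrel$.

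The main obstacle I anticipate is the \emph{termination} of $\expandaux$, which the lemma implicitly presumes by asserting that $\expand$ returns. One must show that each application of $\copyrel$ strictly decreases some well-founded measure---plausibly the number of sibling-shared dependencies, or a potential function counting shared transitive dependencies across sibling pairs---so that the recursion bottoms out at a CDD system rather than looping forever. Establishing that $\copyrel$ genuinely reduces this measure (and does not create new sharing elsewhere faster than it removes it) is the delicate point, since duplicating $P$ and its subtree could in principle reintroduce sharing if not done carefully; I would argue that because the copy is fresh and only one body occurrence is redirected, the duplicated dependencies become disjoint from the original's, strictly reducing the count. Once termination and the two invariants are in hand, the lemma follows by assembling the base and inductive cases.
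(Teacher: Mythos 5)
Your proposal follows essentially the same route as the paper's proof: an induction over the recursive calls of $\expandaux$ maintaining the invariant that $\corr$ is a correspondence from the current argument back to $S$ (relying on the definition of $\copyrel$ to preserve it), combined with the observation that $\expandaux$ returns only when $\sharingclause$ finds no sibling-shared dependency, which is exactly the defining condition of a CDD system. The only difference is that you explicitly flag termination of $\expandaux$ as a proof obligation and sketch a decreasing measure, whereas the paper simply asserts that $\expandaux$ returns its parameter at some step; this is a point the paper leaves implicit rather than a divergence in approach.
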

\begin{proof}
  By induction over the evaluation of $\expand$ on
  an arbitrary recursion-free CHC system $S$.
  The inductive fact is that for each evaluation step
  $\corr$ is a correspondence from argument
  $S'$ to $S$.
  In the base case, \expandaux is called initially on $S$,
  by \autoref{alg:expand}.
  $\corr$ is a correspondence from $S$ to itself, by the definition of
  $\corr$
  (\autoref{app:cons-cdd}).

  In the inductive case,
  $\expandaux$ constructs an argument
  $\copyrel(S, C, P)$
  where $C$ is a clause and $P$ is a predicate in $S$.
  $\expandaux$ recusively invokes itself with this argument.
  For each recursion-free CHC system $S'$ generated by $\copyrel(S, C, P)$,
  $\corr$ is a correspondence from $S'$ to
  $S$ by definition of $\copyrel$
  (\autoref{app:cons-cdd}).
  By this fact and the inductive hypothesis, $\corr$ is
  a correspondence from $\copyrel(S',
  C, P)$ to $S$.

  $\expandaux$ returns its parameter at some step, by
  \autoref{alg:expand}.
  Therefore, $\expandaux$ returns an expansion of $S$.

  For a given recursion-free CHC system $S'$, if $(S', \eta) =
  \expand(S')$,
  then $\sharingclause(S') = \none$, by the definition of $\expand$.
  If $\sharingclause(S') = \none$, then $S'$ is
  CDD, by the definition of $\sharingclause$ and CDD systems
  (\autoref{defn:cdds}).
  Therefore, $S'$ is CDD.
\end{proof}
Furthermore, $\expand$ returns a \emph{minimal} CDD
expansion of its input.
This fact is not required to prove \autoref{thm:corr}, and thus a
complete proof is withheld.

For each recursion-free CHC system $S$, $S$ has a solution if and only
if all interpolation queries return interpolants.
\begin{lem}
  \label{lem:vc}
  Given a recursion-free CHC system $S$ that is CDD and solvable, for
  all predicates $P \in \predof{S}$,
  $\solveitp$ returns
  an interpolant $I$.
\end{lem}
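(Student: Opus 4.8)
The plan is to prove the lemma by induction on the topological order in which $\solvecdd$ processes the predicates of $S$, exploiting solvability to certify that each interpolation query is posed between a pair of mutually inconsistent formulas, which is exactly the condition under which $\solveitp$ returns an interpolant rather than $\none$. Since $S$ is solvable the base case is immediate, and I would maintain the invariant that the partial solution $\sigma$ built so far extends to a full solution of $S$ that agrees with $\sigma$ on every already-processed predicate. When $\solvecdd$ reaches a predicate $P$, this invariant supplies a solution $\tau$ with $\tau[Q] = \sigma[Q]$ for every $Q \in \depsof{P}$ (all of which precede $P$ in the order). The goal at each step is then to show $\prectr(P,\sigma) \land \postctr(P,\sigma) \entails \false$ and to re-establish the invariant for $\sigma$ extended by the returned interpolant.

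First I would establish the forward entailment $\prectr(P,\sigma) \entails \tau[P]$. Since $\prectr(P,\sigma) = \ctrof{P,\sigma} \land \bigland_{Q \in \depsof{P}}(\neg \cc{b}_Q \lor \sigma[Q])$, any model selects a defining clause $C$ of $P$ with $\consof{C}$ true and $\cc{b}_Q$ true for each $Q \in \bodyof{C}$; the second conjunct then forces $\sigma[Q]$, which equals $\tau[Q]$, and the validity of $C$ under the solution $\tau$ yields $\tau[P]$. Next I would show the backward entailment $\tau[P] \land \postctr(P,\sigma) \entails \false$: the post-formula $\cc{query} \land \bigland_{Q \in D} \vc{Q,\sigma}$ asserts a derivation running from $P$ up to the query clause, whereas $\tau$, being a solution, makes the query unsatisfiable, so no such derivation can be realized. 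Combining the two entailments gives $\prectr(P,\sigma) \land \postctr(P,\sigma) \entails \false$, so $\solveitp$ returns an interpolant $I$.

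To close the induction I must re-establish the invariant for $\sigma$ updated with $\sigma[P] \mapsto I$. The interpolant properties give $\prectr(P,\sigma) \entails I$ and $I \land \postctr(P,\sigma) \entails \false$. The clause-selection argument above, applied to the first property, shows that interpreting $P$ as $I$ validates every clause whose head is $P$; the second property guarantees that no derivation from $P$ reaches the query once $I$ is propagated forward through the not-yet-processed predicates that depend on $P$. I would therefore exhibit the extending solution explicitly: keep $\sigma$ on processed predicates, use $I$ for $P$, and re-derive the remaining predicates by forward reachability, verifying query safety from $I \land \postctr(P,\sigma) \entails \false$.

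The crux of the whole argument is the \emph{faithfulness} of the post-formula: I must show that $\postctr(P,\sigma)$ is satisfiable together with a given valuation of $P$ precisely when the query is derivable from that valuation in $S$. This is where the CDD hypothesis is indispensable. Because siblings have disjoint transitive dependencies and no predicate occurs twice in a single body (\autoref{defn:cdds}), along every derivation from $P$ to the query each predicate is invoked at most once, so the boolean indicators $\cc{b}_Q$ occurring in the characterizations $\vc{Q,\sigma}$ can be assigned without conflict; consequently a model of $\postctr(P,\sigma)$ corresponds to a genuine tree-shaped derivation rather than a spurious one that reuses a predicate under two incompatible constraints. Making this correspondence precise, while handling the fact that $\postctr$ mixes already-solved predicates (whose $\ctrof{P,\sigma}$ is a lookup in $\sigma$) with unprocessed transitive dependents of $P$ in $D$ (whose $\ctrof{P,\sigma}$ is the clause-based disjunction), is the main obstacle, and I expect both the backward entailment and the invariant-maintenance step to reduce to it.
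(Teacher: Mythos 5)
Your proof takes a genuinely different, and structurally more demanding, route than the paper's. The paper disposes of this lemma in three lines by contradiction: if $\solveitp$ returned \cc{SAT} there would be a model of $\prectr(P,\sigma') \land \postctr(P,\sigma')$, but the post-formula is asserted to be unsatisfiable ``by the definition of a solution of a CHC system,'' so no such model exists. In particular the paper does no induction here and does not track how the partial solution $\sigma'$ assembled from earlier interpolants relates to an actual solution of $S$; the invariant-maintenance work you put into this lemma is instead concentrated in the paper's separate \autoref{lem:solve-aux}, which is proved by essentially the induction you describe (topological order; $\prectr(P,\sigma) \entails I$ validates the clauses with head $P$; $I \land \postctr(P,\sigma) \entails \false$ keeps the query safe). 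So your proposal in effect fuses \autoref{lem:vc} and \autoref{lem:solve-aux} into a single induction. What your route buys is honesty about a real subtlety the paper's proof elides: the formulas interpolated at step $k$ are built from the algorithm's partial solution, not from the hypothesized solution of $S$, so one must carry an invariant connecting the two (your ``$\sigma$ extends to a full solution $\tau$''), and your forward entailment $\prectr(P,\sigma) \entails \tau[P]$ is exactly the bridge the paper's proof silently crosses when it switches from $\sigma'$ to $\sigma$. What the paper's route buys is brevity.

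That said, your argument is not complete, and you say so: both the backward entailment $\tau[P] \land \postctr(P,\sigma) \entails \false$ and the re-establishment of the invariant rest on the ``faithfulness'' of the post-formula --- that a model of $\postctr(P,\sigma)$ corresponds to a genuine derivation of the query from $P$ --- which you flag as the main obstacle but do not discharge. That correspondence is the actual mathematical content of the lemma (it is where the CDD hypothesis of \autoref{defn:cdds} and the particular choice of the set $D$ earn their keep), and the paper does not discharge it either. A complete proof must establish it explicitly, including the treatment of predicates reachable inside some $\ctrof{Q,\sigma}$ whose boolean indicators are unconstrained because those predicates fall outside $D$.
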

\begin{proof}
  Assume that $S$ has a solution $\sigma$ and there are some
  predicates $P \in \predof{S}$ such that
  $\solveitp$
  returns $SAT$. This means there must be a model $m$ for the
  conjunction of $\prectr(\cc{P},\sigma')$ and
  $\postctr(\cc{P},\sigma')$.
  But $\postctr(\cc{P},\sigma) = \false$, by the definition of
  a solution of a CHC system.
  Therefore, there can be no such model $m$.
  Therefore,
  $\solveitp$ always
  returns interpolant.
\end{proof}

\begin{lem}
  \label{lem:solve-aux}
  Given a recursion-free CHC system $S$ that is CDD, if for all
  predicates $P \in \predof{S}$,
  $\solveitp$ returns
  an interpolant $I$, then $\solvecdd(S)$ is a solution of $S$.
\end{lem}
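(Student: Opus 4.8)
The plan is to show that the map $\sigma$ returned by $\solvecdd(S)$ satisfies the definition of a solution: every clause of $S$ becomes valid once each predicate is replaced by its interpretation. Since $\solvecdd$ assigns interpretations in a topological order of the dependency relation, I would argue in that same order, so that whenever $P$ is assigned $\sigma[P] \assign I$, every predicate in $\depsof{P}$ already has an entry in $\sigma$. By the hypothesis of the lemma every interpolation query succeeds, so for each $P$ the returned $I$ satisfies the three conditions of \autoref{defn:itps} against $\prectr(P,\sigma)$ and $\postctr(P,\sigma)$. It then suffices to verify two facts: \textbf{(i)} for each non-query clause $C$ with $\headof{C}=P$, that $\consof{C} \land \bigland_{Q \in \bodyof{C}} \sigma[Q] \entails \sigma[P]$; and \textbf{(ii)} for the query clause $C_q$, that $\consof{C_q} \land \bigland_{Q \in \bodyof{C_q}} \sigma[Q] \entails \false$.

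For \textbf{(i)} I would use interpolant condition (1), namely $\prectr(P,\sigma) \entails \sigma[P]$, and instantiate the boolean indicators so that $\prectr(P,\sigma)$ collapses onto the single disjunct of $\ctrof{P,\sigma}$ contributed by $C$. Concretely, set $\cc{b}_Q = \true$ for every $Q \in \bodyof{C}$ and $\cc{b}_Q = \false$ otherwise. Under this valuation the disjunct for $C$ reduces to $\consof{C}$, every competing disjunct is killed, and each dependency conjunct $\neg \cc{b}_Q \lor \sigma[Q]$ reduces to $\sigma[Q]$ exactly for $Q \in \bodyof{C}$ and to $\true$ elsewhere. Hence $\consof{C} \land \bigland_{Q\in\bodyof{C}}\sigma[Q]$ entails the instantiated $\prectr(P,\sigma)$, which entails $\sigma[P]$, giving clause validity.

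For \textbf{(ii)} I would use condition (2), $\sigma[R] \land \postctr(R,\sigma) \entails \false$, for the predicate $R \in \bodyof{C_q}$ that is processed last, so that every other query-body predicate already lies in $\sigma$. Since $\postctr(R,\sigma) = \cc{query} \land \bigland_{Q \in D}\vc{Q,\sigma}$ with $\vc{Q,\sigma} = \neg \cc{b}_Q \lor \ctrof{Q,\sigma}$, and the remaining query-body predicates are siblings of $R$ and hence lie in $D$, the dual valuation does the work: set $\cc{b}_Q = \true$ for the query-body predicates and $\cc{b}_Q = \false$ for all others. The false assignments make every auxiliary conjunct $\vc{Q,\sigma}$ for a non-query-body predicate trivially true via $\neg \cc{b}_Q$, so these conjuncts vanish; $\cc{query}$ reduces to $\consof{C_q}$; and each $\vc{R',\sigma}$ for a query-body predicate $R'$ reduces to $\ctrof{R',\sigma}=\sigma[R']$, a lookup since $R'$ is already solved. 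Condition (2) then yields exactly $\consof{C_q} \land \bigland_{R'}\sigma[R'] \entails \false$.

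The step I expect to be the main obstacle is justifying that each interpretation $\sigma[P]$ is a formula over the parameters of $P$ alone, free of boolean indicators; this is what legitimizes the valuation arguments above and makes $\sigma[P]$ a substitutable interpretation. Here the CDD hypothesis (\autoref{defn:cdds}) and recursion-freeness enter. The indicators occurring in $\prectr(P,\sigma)$ are those of $\depsof{P}$, whereas those occurring in $\postctr(P,\sigma)$ are those of the transitive dependents of $P$, their siblings, and the siblings' transitive dependencies. I would show these two indicator sets are disjoint: recursion-freeness separates dependents from dependencies, and the CDD disjointness of sibling transitive dependencies rules out the remaining overlaps. Consequently no boolean indicator lies in $\vocab(\prectr(P,\sigma)) \intersection \vocab(\postctr(P,\sigma))$, so by interpolant condition (3) the interpolant $I=\sigma[P]$ mentions only the program variables shared between $P$'s definition and its uses, namely the parameters of $P$. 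With this established, the valuations of \textbf{(i)} and \textbf{(ii)} leave $\sigma[P]$ untouched, so every clause of $S$ is valid under $\sigma$ and $\solvecdd(S)$ is a solution of $S$.
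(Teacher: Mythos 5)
Your overall strategy is the same as the paper's: process predicates in topological order, use interpolant condition (1) to validate each clause whose head is $P$, and condition (2) to discharge the query clause; your boolean-indicator valuation arguments for steps \textbf{(i)} and \textbf{(ii)} fill in details that the paper's own proof merely asserts. However, the supporting claim you yourself flag as ``the main obstacle'' --- that no indicator variable lies in $\vocab(\prectr(P,\sigma)) \intersection \vocab(\postctr(P,\sigma))$, so that $I$ is indicator-free --- is false for CDD systems in general, and your appeal to recursion-freeness plus \autoref{defn:cdds} does not close it. CDD requires siblings to have \emph{disjoint} transitive dependencies, but it does not forbid one sibling from \emph{being} a transitive dependency of the other. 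Concretely, take $S = \{\, B(x) \gets x \geq 0;\ A(x) \gets B(x);\ \false \gets A(x) \land B(x) \land x < 0 \,\}$. This is CDD ($\tdepsof{A} = \{B\}$ and $\tdepsof{B} = \emptyset$ are disjoint), yet when $\solvecdd$ reaches $A$, the indicator $\cc{b}_B$ occurs in $\prectr(A,\sigma)$ (via $\ctrof{A,\sigma}$ and the conjunct $\neg \cc{b}_B \lor \sigma[B]$, since $B \in \depsof{A}$) \emph{and} in $\postctr(A,\sigma)$ (via $\cc{query}$ and via $\vc{B,\sigma}$, since $B$ is a sibling of $A$ and hence in $D_1$). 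Condition (3) of \autoref{defn:itps} therefore permits $\solveitp$ to return an interpolant mentioning $\cc{b}_B$, and nothing in your argument rules that out. Since your valuations in \textbf{(i)} and \textbf{(ii)} fix the indicators to clause-specific truth values, an interpolant containing $\cc{b}_B$ would not be a substitutable interpretation and the clause-validity entailments would not follow.

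Your disjointness argument does work for the pieces you can control --- $\depsof{P}$ is disjoint from the transitive dependents $D_0$ (by recursion-freeness) and from $D_2$ (by the CDD sibling condition) --- but it fails against $D_1$ and against $\bodyof{C_q}$, exactly because a dependency of $P$ may also be a body-mate of $P$ or of one of $P$'s transitive dependents. To repair the proof you would need either an additional argument that an indicator-free interpolant always \emph{exists} and may be assumed returned (e.g., by eliminating the shared indicators), or a restriction/modification of the pre- and post-formula construction. For what it is worth, the paper's own proof of this lemma silently assumes the same property without ever stating it, so you have correctly located the genuine difficulty; you just have not resolved it.
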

\begin{proof}
  By induction on the $\solvecdd(S)$ calls to $\solveitp$ over all
  predicates $P \in \predof{S}$
  in topological order.
  The inductive fact is that after each call to $\solveitp$, $\sigma$
  is a partial solution of $S$.
  In the base case, $\predof{S} = \varnothing$,
  Therefore, $\sigma$ is a solution of $S$.

  In the inductive case, $\solvecdd$ calls $\solveitp$ on a predicate
  $P \in \predof{S}$ with partial solution $\sigma$.  Due to the
  topological ordering, $\sigma$ contains interpretations for each
  predicate $P \in \depsof{S}$.
  Based on the definition of an interpolant (\autoref{defn:itps}),
  $\prectr(\cc{P},\sigma)$ and $\postctr(\cc{P},\sigma)$ are
  inconsistent.
  The interpolant of these two formulas returned by $\solveitp$, $I$,
  is entailed by each clause $C$ where $P$ is the head where the
  predicates in $\bodyof{C}$ are substituted by their interpretations
  in $\sigma$.
  $I$ is also inconsistent with all constraints that appear after $P$
  that support the query clause.
  Therefore, when $\solvecdd$ updates $\sigma$ by binding $P$ to $I$
  the result is a partial solution of $S$.
\end{proof}

The output of $\solvecdd$ is correct for a given input CDD system.
\begin{lem}
  \label{lem:cdd-soln-sound}
  For a given CDD system $S$, and $\sigma = \solvecdd(S)$, %
  $\sigma$ is a solution of $S$.
\end{lem}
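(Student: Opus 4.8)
The plan is to obtain this lemma almost immediately from \autoref{lem:solve-aux}, reading the statement as a soundness claim that is conditioned on $\solvecdd$ returning an actual solution map rather than the sentinel $\nosoln$. First I would make explicit the implicit case split hidden in the hypothesis ``$\sigma = \solvecdd(S)$'': since $\solvecdd$ may return $\nosoln$, the assertion that $\sigma$ \emph{is a solution} only needs to be established when $\sigma \neq \nosoln$. So I would begin by assuming $\sigma = \solvecdd(S)$ is a map.

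Next I would trace the control flow of \autoref{alg:solve-cdd} to connect this assumption to the premise of \autoref{lem:solve-aux}. The key observation is that the only way $\solvecdd$ reaches its final return on \autoref{line:solve-done} (yielding a map) is for the $\cc{SAT}$ branch on \autoref{line:interp-sat} never to be taken: any occurrence of $\cc{SAT}$ forces an immediate return of $\nosoln$. Hence for every predicate $P \in \predof{S}$ processed in the topological order, the call to $\solveitp$ on \autoref{line:interp} must have returned an interpolant $I$, which was then bound into $\sigma$ on \autoref{line:interp-valid}. This establishes verbatim the hypothesis of \autoref{lem:solve-aux}, namely that $\solveitp$ returns an interpolant for all $P \in \predof{S}$.

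I would then invoke \autoref{lem:solve-aux} directly to conclude that the returned $\sigma$ is a solution of $S$, completing the argument. I do not anticipate a genuine obstacle, because the substantive work—the induction over the topological order showing that each interpolant consistently extends the partial solution—has already been discharged by \autoref{lem:solve-aux} (and the dual direction, that solvability guarantees every query succeeds, is covered by \autoref{lem:vc}). The only care required is to state the case split cleanly and to justify the single step that ``returns a map'' is equivalent to ``every interpolation query succeeded,'' which is a straightforward reading of the algorithm's return conditions.
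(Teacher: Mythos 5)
Your proposal is correct and matches the paper's own proof: both argue that $\solvecdd$ returning a map $\sigma$ (rather than $\nosoln$) means every call to $\solveitp$ produced an interpolant, and then conclude by applying \autoref{lem:solve-aux}. Your version merely spells out the control-flow reading of \autoref{alg:solve-cdd} that the paper leaves implicit.
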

\begin{proof}
  The fact that $\solvecdd(S)$ returns $\sigma$ implies that for each
  predicate $P \in \predof{S}$,
  $\solveitp$ returns
  a valid interpolant (\autoref{alg:solve-cdd}).
  Therefore, \autoref{lem:solve-aux} implies that \solvecdd~returns a
  complete solution of $S$.
\end{proof}

\begin{lem}
  \label{lem:cdd-soln-complete}
  For a CDD system $S$ such that $S$ is
  solvable, %
  there is some $\sigma$ such that %
  $\sigma = \solvecdd(S)$.
\end{lem}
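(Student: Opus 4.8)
The plan is to obtain the completeness of $\solvecdd$ as an immediate consequence of the two lemmas that directly precede this statement, \autoref{lem:vc} and \autoref{lem:solve-aux}, by composing them. First I would observe, by inspection of \autoref{alg:solve-cdd}, that $\solvecdd(S)$ fails to return a map $\sigma$ in exactly one situation: when some call to $\solveitp$ on \autoref{line:interp} returns \cc{SAT}, in which case the algorithm takes the branch on \autoref{line:interp-sat} and returns $\nosoln$. In every other run the loop processes each predicate in the topological order fixed on \autoref{line:top-sort}, binds each predicate in $\sigma$ on \autoref{line:interp-valid}, and reaches \autoref{line:solve-done}, where it returns $\sigma$. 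Thus it suffices to show that, when $S$ is a solvable CDD system, no interpolation query ever returns \cc{SAT}.

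Next I would invoke \autoref{lem:vc}, whose hypotheses are precisely that $S$ is CDD and solvable, to conclude that for every predicate $P \in \predof{S}$ the call to $\solveitp$ returns an interpolant rather than \cc{SAT}. Consequently the \cc{SAT} branch on \autoref{line:interp-sat} is never taken, the loop runs to completion, and $\solvecdd(S)$ returns some $\sigma$ at \autoref{line:solve-done}, which establishes the existence claim. To additionally record that this $\sigma$ is a genuine solution rather than an arbitrary map, I would then apply \autoref{lem:solve-aux}, whose antecedent---that every call to $\solveitp$ returns an interpolant---has just been verified; it yields that the returned $\sigma$ is in fact a solution of $S$.

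The one point deserving care is the interaction between the partial solution accumulated across loop iterations and the interpolation query issued at each step, since $\prectr(P,\sigma)$ and $\postctr(P,\sigma)$ depend on the interpretations already stored for the dependencies of $P$. However, this is exactly the bookkeeping handled inductively inside \autoref{lem:vc} and \autoref{lem:solve-aux}, which reason about the partial solution maintained in topological order; the genuine work---observing that under a solution $\postctr(P,\sigma)$ is inconsistent, so the pre- and post-formulas can never be mutually satisfiable---is already discharged in \autoref{lem:vc}. I therefore do not anticipate a substantive obstacle here: the present statement is essentially the existence-form restatement of \autoref{lem:vc}, and since I am entitled to assume both preceding lemmas, the remaining argument is a clean composition requiring no further calculation, with $\sigma = \solvecdd(S)$ being the desired solution.
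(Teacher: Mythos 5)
Your proposal is correct and follows exactly the paper's own argument: it composes \autoref{lem:vc} (which rules out the \cc{SAT} branch for a solvable CDD system) with \autoref{lem:solve-aux} (which upgrades the returned map to a genuine solution). The extra unpacking of \autoref{alg:solve-cdd}'s control flow is a harmless elaboration of the same two-step composition the paper uses.
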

\begin{proof}
  For all predicates $P \in \predof{S}$, $\solveitp$ returns a
  interpolant $I$, by
  \autoref{lem:vc} and the fact that $S$ is solvable.
  Therefore, by \autoref{lem:solve-aux} and the fact that
  $S$ is solvable, $\solvecdd(S)$ returns a
  solution of $S$.
\end{proof}

The output of \sys is correct for a given input CDD system.
(\autoref{sec:approach}, \autoref{thm:corr}).
\begin{proof}
  Given two recursion-free CHC systems $S$ and $S'$ and a
  $\eta$ such that $(S', \eta) = \expand(S)$,
  $S'$ is minimal CDD expansion of $S$ and $\eta$ is a correspondence
  from $S'$ to $S$ (\autoref{lem:expand-corr}).
  Assume that $S$ is solvable. Then so is $S'$, by
  \autoref{lem:expansion-complete}.
  Therefore, there exists some $\sigma'$ such that $\sigma' =
  \solvecdd(S')$, by the definition of $\shara$.
  $\sigma'$ is a solution of $S'$, by \autoref{lem:cdd-soln-sound}.
  $\collapse{ \eta }{ \sigma' }$ is a solution of $S$, by
  \autoref{lem:expansion-sound}.
  Therefore, $\shara$ returns a valid solution of $S$.
\end{proof}

\end{document}